\documentclass[11pt]{article}
\usepackage[margin=1in]{geometry}
\usepackage{amssymb}
\usepackage{amsmath}
\usepackage{amstext}
\usepackage{amsthm}
\usepackage{bbold}
\usepackage{graphicx}
\usepackage{algorithm}
\usepackage{algcompatible}
\usepackage[labelsep=endash, labelfont=bf, format=plain, textformat=simple, justification=justified, textfont=small]{caption}
\usepackage[labelfont={bf, large}, position=top, singlelinecheck=off, justification=justified, labelformat=simple, skip=0pt]{subcaption}
\DeclareCaptionLabelFormat{bold}{\textbf{#2}}
\captionsetup{subrefformat=bold}
\usepackage{bm}
\usepackage[table]{xcolor}
\usepackage{stmaryrd} 
\usepackage[table]{xcolor}
\usepackage{booktabs}

% References and citations
%\usepackage[style=nature, backend=bibtex]{biblatex}
%\bibliography{WHInter.bib}
\usepackage[round]{natbib}
\usepackage{hyperref} % Loading hyperref should come before setting the bibliography style
\usepackage[capitalise]{cleveref} %Loading cleveref should come after hyperref
\bibliographystyle{plainnat}
\hypersetup{colorlinks, citecolor=blue, linkcolor=black}
\crefdefaultlabelformat{{#2#1#3}}
\crefname{chapter}{Chapter}{Chapters}
\crefname{section}{Section}{Sections}
\crefname{subsection}{Section}{Sections}
\crefname{subsubsection}{Section}{Sections}
\crefname{figure}{Figure}{Figures}
\crefname{table}{Table}{Tables}

% Definition of new commands

\algnewcommand\algorithmicparam{\textbf{Param: }}
\algnewcommand\PARAM{\item[\algorithmicparam]}
\algrenewcomment[1]{\bgroup\hfill\(\triangleright\) ~#1\egroup}
\newcommand{\RR}{\mathbb{R}} %real numbers
\newcommand{\NN}{\mathbb{N}}
\newcommand{\Bcal}{\mathcal{B}}

\newcommand{\Ical}{\mathcal{I}}

\newcommand{\Ncal}{\mathcal{N}}

\newcommand{\Qcal}{\mathcal{Q}}

\newcommand{\Scal}{\mathcal{S}}

\newcommand{\Ucal}{\mathcal{U}}
\newcommand{\Vcal}{\mathcal{V}}
\newcommand{\Wcal}{\mathcal{W}}
\newcolumntype{C}[1]{>{\centering\arraybackslash}m{#1}}
\newcolumntype{L}[1]{>{\raggedright\arraybackslash}m{#1}}

\newtheorem{Lemma}{Lemma}[section]
\newtheorem{Theorem}{Theorem}[section]
\newtheorem{TheoremS}{Theorem}[subsection]

\def\abovestrut#1{\rule[0in]{0in}{#1}\ignorespaces}

\newcommand{\one}{\mathbf{1}}
\newcommand{\zero}{\mathbf{0}}
\newcommand{\beginsupplement}{%
        \setcounter{table}{0}
        \renewcommand{\thetable}{S\arabic{table}}%
        \setcounter{figure}{0}
        \renewcommand{\thefigure}{S\arabic{figure}}%
        \setcounter{algorithm}{0}
        \renewcommand{\thealgorithm}{S\arabic{algorithm}}%
        \renewcommand{\thesubsection}{\Alph{subsection}}%
        \setcounter{equation}{0}
        \renewcommand{\theequation}{S\arabic{equation}}%
     }

% Added by JP
\newcommand {\br}[1]{\left(#1\right)}
\newcommand {\sqb}[1]{\left[#1\right]}
\newcommand {\cbr}[1]{\left\{#1\right\}}
\newcommand {\bbr}[1]{\left\llbracket#1\right\rrbracket}
\newcommand {\nm}[1]{\Arrowvert\, #1 \,\Arrowvert}
\newcommand {\abs}[1]{\left\vert\, #1 \,\right\vert}
    %training patterns
\newcommand{\Xb}{\mathbf{X}}    %training patterns
\newcommand{\zb}{\mathbf{z}}    %training patterns
\newcommand{\Zb}{\mathbf{Z}}    %training patterns
\newcommand{\yb}{\mathbf{y}}    %training patterns
    %training patterns
\newcommand{\ub}{\mathbf{u}}    %training patterns
\newcommand{\vb}{\mathbf{v}}    %training patterns
\newcommand{\Mb}{\mathbf{M}}    %training patterns
\newcommand{\mb}{\mathbf{m}}    %training patterns
\newcommand{\wb}{\mathbf{w}}    %training patterns
\newcommand{\thetab}{\bm{\theta}}    %training patterns
\newcommand{\Thetab}{\bm{\Theta}}    %training patterns
\newcommand{\phib}{\bm{\theta}}    %training patterns
    %training patterns
\newcommand{\rb}{\mathbf{r}}    %training patterns
\newcommand{\ab}{\mathbf{a}}    %training patterns
    %training patterns
\newcommand{\supp}{\text{supp}}

% Defining tikz styles
\usepackage{tikz}
\usetikzlibrary{trees, fit}
\usetikzlibrary{arrows,shapes,snakes,automata,backgrounds,petri}
\tikzstyle{source}=[circle,thick,draw=blue!75,fill=blue!20,minimum size=7mm]
\tikzstyle{sink}=[circle,thick,draw=blue!75,fill=blue!20,minimum size=7mm]
\tikzstyle{group}=[place,thick,draw=red!75,fill=red!20, minimum size=7mm]
\tikzstyle{groupgray}=[place,thick,draw=gray!75,fill=gray!60, minimum size=8mm]
\tikzstyle{empty}=[place,draw=red!0, minimum size=8mm]
\tikzstyle{var}=[rectangle,thick,draw=black!75,fill=black!20,minimum size=6mm]
\tikzstyle{varc}=[rectangle,thick,draw=darkred!75,fill=darkred!20,minimum size=6mm]
\tikzstyle{varinv}=[rectangle,thick,draw=black!0,fill=black!0,minimum size=6mm]
\tikzstyle{arrow}=[->,very thick]
\tikzstyle{arrowc}=[->,line width=0.8mm,color=darkred]
\tikzstyle{arrowb}=[->,very thick,dotted]
\definecolor{darkred}{rgb}{0.7,0,0}

\date{\today}

\begin{document}

\title{WHInter: A Working set algorithm for High-dimensional sparse second order Interaction models.}
\author{Marine Le Morvan$^{1,2,3}$ and Jean-Philippe Vert$^{1,2,3,4}$\\ \\
$^1$MINES ParisTech, PSL Research University, CBIO-Centre for Computational Biology,\\
75006 Paris, France\\
$^2$Institut Curie, 75005 Paris, France\\
$^3$INSERM, U900, 75005 Paris, France\\
$^4$Ecole Normale Sup\'erieure, Department of Mathematics and Applications, 75005 Paris, France\\ \\
\url{{marine.le_morvan, jean-philippe.vert}@mines-paristech.fr}
}

\maketitle

\begin{abstract}
Learning sparse linear models with two-way interactions is desirable in many application domains such as genomics. $\ell_1$-regularised linear models are popular to estimate sparse models, yet standard implementations fail to address specifically the quadratic explosion of candidate two-way interactions in high dimensions, and typically do not scale to genetic data with hundreds of thousands of features. Here we present WHInter, a working set algorithm to solve large $\ell_1$-regularised problems with two-way interactions for binary design matrices. The novelty of WHInter stems from a new bound to efficiently identify working sets while avoiding to scan all features, and on fast computations inspired from solutions to the maximum inner product search problem. We apply WHInter to simulated and real genetic data and show that it is more scalable and two orders of magnitude faster than the state of the art.
\end{abstract}

\section{Introduction}

In application domains where the number of features exceeds the number of available samples, sparsity-inducing regularisers have a long history of success. Genomic prediction of complex phenotypes, biomedical imaging, astronomy or finance are a few examples. In particular the least squares with $\ell_1$ regularisation, known as the LASSO \citep{Tibshirani1996Regression}, has been extensively studied. It enjoys desirable statistical properties, since the number of samples required for exact support recovery of a sparse model scales as the logarithm of the number of features, under some assumptions \citep{Wainwright2009Sharp}. It also enjoys practical advantages, notably the interpretability of the learned models and the availability of fast solvers.\\

%The incorporation of interaction features in $\ell_1$ regularised linear models is a longstanding issue. By interaction feature we mean the entry-wise multiplication between two or more features. In genomics for example, a disease can be triggered when particular versions of two genes co-occur, but not when only one of them is present. However, learning sparse linear models with interaction terms remains a challenging task. Indeed, unless the initial number of features is fairly small, the number of interaction terms quickly runs into the billions. For example a genomics problem with $10^5$ features, which is common, leads to a problem with roughly $10^{10}$ features when two-way interactions are considered. Such high-dimensional optimisation problem cannot be handled with available solvers.

Indeed, a lot of research effort has been devoted to accelerating solvers for sparsity constrained problems in high dimension. A central idea is to exploit the sparsity of the solution to develop algorithms that do not spend too much time on optimising coefficients that will end up being $0$. For example, safe screening rules identify features which are guaranteed to be inactive at the optimum so that their corresponding coefficients can be safely zeroed and set aside from the pool of coefficients to update \citep{ElGhaoui2012Safe,Xiang2011ST3,Xiang2012dome,Fercoq2015Mind,Wang2013EDPP, Raj2016ScreeningRulesGeneralization}. Dynamic screening rules \citep{Bonnefoy2015Dynamic} such as the GAP safe rules \citep{Fercoq2015Mind} are particularly useful since more and more coefficients can be safely zeroed while the solver approaches the optimal solution. In spite of this, safe rules tend to be conservative, thereby limiting the potential speed-up. To remedy this drawback, new working set heuristics have been proposed. Working set algorithms iteratively solve subproblems, either problems restricted to a subset of features in the primal or to a subset of constraints in the dual, until convergence. Working set methods allow to focus coefficient updates on a set of features which can be significantly smaller than that yielded by safe rules. However this comes at a cost, that of checking the optimality conditions for all features at each iteration. BLITZ \citep{Johnson2015BLITZ} is a recently proposed working set algorithm that has been shown to have state-of-the-art performance for $\ell_1$-regularised problems. Interestingly, the choice of the working sets in BLITZ can be seen as an aggressive use of the GAP safe rules \citep[as noted in][]{Massias2017WorkingSets} where the size of the working set is chosen to maximise the progress towards convergence. BLITZ can therefore be combined with the GAP safe rules (or the FLEX constraint elimination according to Johnson et al. terminology) at no cost. A direct comparison between BLITZ and the GAP safe rules by \citet{Ndiaye2017Gap} illustrates the effectiveness of the working set approach. Further developments have also focused on coordinate descent (CD) to avoid wasteful coordinate updates, which represent most of the time spent by the solver \citep{Fujiwara2016Sling, Johnson2017stingy}.\\

The problem of fitting sparse linear models with two-way interactions has also attracted attention during the past decade. By two-way interactions we mean the entry-wise multiplication between two features; this is for example important in genomics to detect possible epistasis between genes. Surprisingly, very few of these works have links with the aforementioned literature. A majority of them focus on the design of sparsity-inducing penalties which enforce heredity assumptions and apply to moderate-dimensional settings ($p < 1,000$) \citep{Radchenko2010Vanish, Bien2013lasso, Lim2015glinternet, Haris2016Family}. Heredity assumptions state that an interaction can be included in the model only if one or both of its corresponding main effects are included. We note however that \texttt{glinternet} \citep{Lim2015glinternet} was applied to higher dimensional problems and in particular to a dataset with roughly $p = 27,000$ main effects, although the size of the learned model is not specified and the running time for the experiment is not reported by the authors. Interestingly, \texttt{glinternet} uses an active set strategy. Comparatively few works have been devoted to learning sparse regression models with interactions when the number of interactions is higher. Most of them are heuristics which start by selecting main effects and then incorporate interactions generated under the heredity constraint in a possibly iterative fashion. The simplest form of such heuristics consists in fitting a sparse linear model with the main effects only, and then fitting a second sparse linear model on all previously selected main effects and their interactions. This has been used in practice for example by \citet{Wu2009TwoStagesInteractions}. Iterative refinements have been proposed where the LASSO is fit several times, and each time the set of candidate interactions considered is updated either by subsets, with the interactions between the K most relevant main effects selected at the previous fit \citep{Bickel2010Hierarchical}, or in a greedy fashion, where new interactions are included in the model as soon as a new main effect enters the LASSO path \citep{Shah2016backtracking}. In a similar vein, \citet{Hao2014iFOR} is based on a greedy model selection procedure instead of several LASSO fits. While these heuristics can deal with higher-dimensional problems than previous methods and enjoy some desirable statistical properties, they do not provide exact solutions and do not enjoy statistical properties as strong as those of the LASSO estimator.\\

An interesting link between the literature on interactions and that of solver acceleration with sparsity inducing norms has been made recently by \citet{Nakagawa2016SPP}. In the case where variables are binary or with values in $[0, 1]$, they propose an approach called Safe Pattern Pruning (SPP) which is able to provide the optimal solution of the LASSO with two-way interactions for fairly high-dimensional problems, with no heredity constraint. Typically, for a problem with 1,000 samples and 10,000 main effects, SPP can provide solutions for a grid of regularisation parameters within one or two hours on a laptop with one core. SPP relies on the recently developed GAP safe screening rules. More precisely, the authors propose a safe pattern pruning criterion that can safely discard subsets of interactions from the model to speed up convergence. The performance of SPP is however hindered by several factors. One of them is that safe screening rules can be quite conservative even in the sequential setting. This property is inherited and amplified by the SPP criterion which can lead to heavy computations. Moreover, the GAP safe rules rely on a dual feasible point which is expensive to compute especially when the number of interactions is huge.\\

Inspired by SPP and the acceleration of solvers for sparsity constrained problems we propose a scalable algorithm, WHInter, to compute the optimal solution of $\ell_1$-regularised linear problems with two-way interactions. WHInter is a working set method that efficiently delineates working sets among all interactions and main effects thanks to two contributions. First, we introduce a cheap and effective bound to rule out subsets of interactions that are guaranteed to be outside of the working set. Second, the identification of the working set among the remaining features is cast as a variant of the Maximum Inner Product Search (MIPS) problem to alleviate the afferent computational load. We find that WHInter is up to two orders of magnitude faster than SPP. For example, a problem with roughly 700 samples and 100,000 main effects can be solved for a grid of regularisation parameters in half an hour on a laptop with one core compared to more than 30 hours with SPP. This improvement in the scalability opens up new horizons in several application fields. The rest of the paper is organised as follows. In section 2, we present useful knowledge and notations used throughout the paper. In section 3 we describe in details our algorithm and our main contributions. In section 4, we evaluate WHInter on simulated datasets and finally in Section 5, we report results on a toxicogenomics prediction task.

\section{Preliminaries}
\subsection{Setting and notations}

\begin{figure}
\begin{tikzpicture}[level distance=1.5cm,
level 1/.style={sibling distance=3.9cm},
level 2/.style={sibling distance=1.25cm}]
\tikzstyle{every node}=[draw,rectangle,rounded corners=3pt, scale=0.9]

\node (Root) {$\emptyset$}
child {
    node {$ \Xb_1$} 
    child { node {$ \Xb_1 \Xb_2$} }
    child { node {$ \Xb_1 \Xb_3$} }
    child { node {$ \Xb_1 \Xb_4$} }
}
child {
    node {$ \Xb_2$} 
    child { node {$ \Xb_2 \Xb_1$} }
    child { node {$ \Xb_2 \Xb_3$} }
    child { node {$ \Xb_2 \Xb_4$} }
}
child {
    node {$ \Xb_3$} 
    child { node {$ \Xb_3 \Xb_1$} }
    child { node {$ \Xb_3 \Xb_2$} }
    child { node {$ \Xb_3 \Xb_4$} }
}
child {
    node {$ \Xb_4$} 
    child { node {$ \Xb_4 \Xb_1$} }
    child { node {$ \Xb_4 \Xb_2$} }
    child { node {$ \Xb_4 \Xb_3$} }
};
\node[draw=red,fit={(Root-1) (Root-1-1)(Root-1-2)(Root-1-3)}, label=\textcolor{red}{Branch 1}, scale=1.1] {}; 
\end{tikzpicture}
\caption{Organisation of the main effects and interactions in a tree, depicted for 4 main effects.}
\label{tree}
\end{figure}
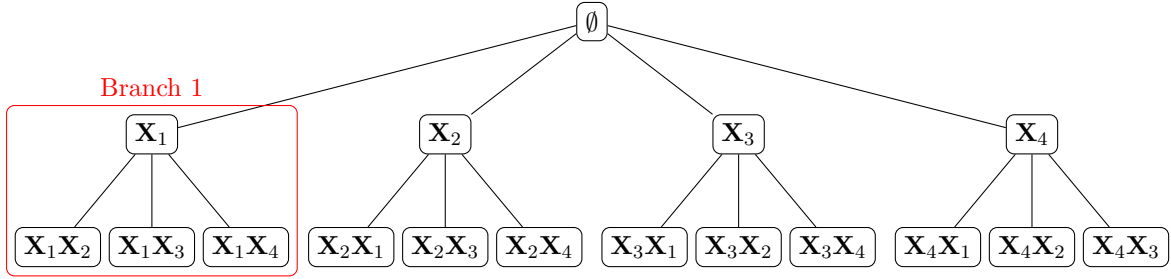

For any integer $d\in\NN$, we note $\bbr{d} = \cbr{1,\ldots,d}$ and $\one_d \in \RR^d$ the $d$-dimensional vector of $1$'s. For any vector $\ub = \br{\ub_1,\ldots,\ub_d} \in \RR^d$, we note $\nm{\ub}_1 = \sum_{i=1}^d \abs{\ub_i}$, $\nm{\ub}_2 = \br{\sum_{i=1}^d \ub_i^2}^{1/2}$, $\supp(\ub) = \cbr{i\in\bbr{d}\,:\,\ub_i \neq 0}$ and $\nm{\ub}_0 = \abs{\supp(\ub)}$. For any two vectors $\ub, \vb \in \RR^d$, $\ub \odot \vb$ is the vector of entry-wise products, i.e., $(\ub \odot \vb)_i := \ub_i \vb_i$ for $i=1,\ldots, d$. For any matrix $\Mb$, we denote by $\Mb_{i,j}$ its $(i,j)$-th entry, $\Mb_j$ its $j$-th column and by $\bm m_i$ its $i$-th row. For any $\ub\in\RR^d$ and $\Ical\subset\bbr{d}$, $\ub_\Ical = \br{\ub_i}_{i\in\Ical}$, and similarly, if $\Mb$ is a matrix with $d$ columns, $\Mb_\Ical$ is the sub-matrix with $\abs{\Ical}$ columns $\Mb_\Ical = \br{\Mb_i}_{i\in\Ical}$.

Throughout the text we consider a design matrix $\Xb \in \{0, 1\}^{n \times p}$ corresponding to $n$ samples and $p$ binary features, together with a response vector $\yb \in \RR^n$. We define an expanded design matrix $\Zb \in \{0, 1\}^{n \times D}$, with $D=p(p+1)/2$, which contains all $p$ features from $\Xb$ plus the $p(p-1)/2$ interaction features. For clarity purposes, we define a symmetric indexing function $\tau: \bbr{p}^2 \mapsto \bbr{D}$ that uniquely assigns to every main effect and interaction an index in the expanded matrix $\Zb$ such that $\Zb_{\tau(j,k)} = \Zb_{\tau(k,j)} := \Xb_j \odot \Xb_k$. In particular $\Zb_{\tau(i,i)} = \Xb_i \odot \Xb_i = \Xb_i$ represents the $i^{th}$ main effect. Since $\Xb$ is a binary matrix, the interaction feature $\Xb_j \odot \Xb_k$ corresponds to a logical AND between features $\Xb_i$ and $\Xb_j$. We organise the main effects and interactions in a simple tree as depicted in Figure~\ref{tree}  so as to reflect the property that $\forall (j, k) \in \bbr{p}^2, \Zb_{\tau(j, k)} \leq \Xb_j$ and $\Zb_{\tau(j, k)} \leq \Xb_k$. In the sequel, the set composed of a main effect and its interactions with all other main effects will be referred to as \textit{a branch} and for for any $j \in \bbr{p}$, we note branch($j$) = $\cbr{\tau(j, k): k\in \bbr{p}}$.\\

\begin{table}
\caption{\textbf{Summary of useful functions for the LASSO and logistic regression}: loss function $f_i$, its derivative $ f_i^\prime$, its Fenchel-Legendre transform $f_i^*$.}
\label{tab:losses}
\begin{center}
\begin{tabular}{cccc}
\toprule
& $f_i(u)$ & $f_i^\prime(u)$ & $f_i^*(u)$\\
\midrule
\abovestrut{0.20in}
LASSO & $\frac{1}{2} \br{\yb_i - u}^2$ & $u - \yb_i$ &  $\frac{1}{2}\br{\yb_i + u}^2 - \frac{1}{2}\yb_i^2$\\
Logistic regr. & $\log( 1 + \exp(-\yb_i u))$ & $-\frac{u}{\yb_i}\log(-\frac{u}{\yb_i}) + (1+\frac{u}{\yb_i})\log(1+\frac{u}{\yb_i})$ & $\frac{-\yb_i}{1 + \exp(\yb_i u)}$\\
\bottomrule
\end{tabular}
\end{center}
\end{table}

We consider the convex optimization problem:
\begin{equation}\label{eq:primal}
\min_{(\wb,b)\in\RR^D\times\RR} P_{\Zb, \lambda}(\wb, b) := F\br{\Zb \wb + b\one_n}  + \lambda \left\lVert \wb \right\rVert_1 := \sum_{i=1}^n f_i \left(\zb_i \wb + b\right) + \lambda \left\lVert \wb \right\rVert_1 \,,
\end{equation}
where $\lambda >0$ is a regularisation parameter and, for any $i\in\bbr{n}$, $f_i: \RR \mapsto [-\infty, +\infty]$ is a loss function parametrised by $\yb_i$ and assumed to be convex and differentiable. Table~\ref{tab:losses} provides examples of classical loss functions in classification and regression. A dual formulation of (\ref{eq:primal}) reads:
\begin{equation}\label{eq:dual}
\max_{\thetab\in\RR^n} D_{\Zb, \lambda}(\thetab) := -\sum_{i=1}^n f_i^* \left(-\thetab_i\right) \quad\text{ s.t. } \quad
\begin{cases}
\left| \Zb_{i}^\top\thetab \right| \leq \lambda \quad \forall i \in \bbr{ D} \,,\\
\one_n^\top \thetab = 0\,,
\end{cases}
\,
\end{equation}
where $f_i^*$ is the Fenchel-Legendre transform of the loss $f_i$, i.e., the function $f_i^*: \RR \mapsto [-\infty, +\infty]$ defined by $f_i^*(u) = \text{sup}_{v \in \RR}\;  uv - f_i(v)$. For the derivation of the dual problem, we refer the reader to \citet[Appendix E]{Johnson2015BLITZ}. The constraint $\one_n^\top \thetab = 0$ comes from the bias term $b\one_n$ in the primal problem (\ref{eq:primal}). We denote by $(\wb^*, b^*)$ and $\thetab^*$ a set of primal and dual optimal solutions to problems (\ref{eq:primal}) and (\ref{eq:dual}) respectively. Strong duality holds and therefore $(\wb^*, b^*)$ and $\thetab^*$ satisfy Fermat's rules \citep{Ndiaye2017Gap}:
\begin{equation}\label{eq:KKT1}
\thetab^* = -\nabla F(\Zb \wb^* + b^* \one_n)\,,
\end{equation}
and
\begin{equation}\label{eq:KKT2}
\forall i \in \bbr{D}, \quad \Zb_{i}^\top \thetab^*\in
\begin{cases}
\left\{-\lambda, \lambda \right\} &\text{ if } \wb^*_{i} \ne 0 \,,\\
\left[ -\lambda, \lambda \right] &\text{ if } \wb^*_{i} = 0\,.
\end{cases}
\end{equation}

\subsection{Basic working set algorithm}
A general strategy to solve (\ref{eq:primal}) is to follow a \emph{working set} approach, as summarised in Algorithm~\ref{alg:working_set}. At each iteration, it solves (\ref{eq:primal}) restricted to a small subset of features $\Wcal$ called the working set. $\Wcal$ is typically chosen as the set of features that violate the optimality condition (\ref{eq:KKT2}) at the current iteration. In the sequel, we will call such features \textit{violating features}, and the branches which contain at least one violating feature will be called \textit{violating branches}. The algorithm converges when no violating feature remains, which occurs in a finite number of iterations as shown in \citet{Kowalski2011Accelerating}. When the number of interaction features runs into the billions, Algorithm~\ref{alg:working_set} is not tractable since the delineation of the working set (line 3 in Alg.~\ref{alg:working_set}) requires $O(p^2n)$ operations at each iteration.

\begin{algorithm}
\caption{Working set algorithm}
\label{alg:working_set}
\begin{algorithmic}[1]
\REQUIRE $\Zb \in \{0,1\}^{n \times p}, \yb \in \RR^n$, $\lambda>0$
\ENSURE  $\wb^*, b^*$
%\FOR {$t=1, 2, \dots$ until converged}
\STATE Set $\thetab \leftarrow -\nabla F(\zero_n)$, $\Wcal=\emptyset$.\Comment{Initialisation}
\WHILE {true}
\STATE $\Wcal^\prime = \left\{i \in \bbr{D}: \left|\Zb_{i}^\top \thetab \right| \geq \lambda \right\}$ \Comment{Update the working set}
\STATE \textbf{if} $\max_{i\in\Wcal'} \abs{\Zb_i^\top \theta} \leq \lambda$ \textbf{then} Break \textbf{else} $\Wcal \leftarrow \Wcal^\prime$
\STATE  $\wb_{\Wcal}^*, b^*\leftarrow \underset{\wb_{\Wcal}, b}{\text{argmin }} P_{\Zb_\Wcal, \lambda} (\wb_{\Wcal}, b)$ \Comment{Solve subproblem}
\STATE  $\thetab \leftarrow -\nabla F(\Zb_\Wcal \wb_{\Wcal}^* + b^*\one_n)$.
\ENDWHILE
\end{algorithmic}
\end{algorithm}

\section{The WHInter algorithm}
\subsection{Overview}
WHInter is a working set algorithm that follows the general scheme of Algorithm~\ref{alg:working_set} but implements an efficient strategy to delineate the working set among all main effects and interactions. It is described in Algorithm~\ref{alg:WHInter_core}. The identification of the working set (line 3 in Algorithm~\ref{alg:working_set}) corresponds to lines 11-18 in Algorithm~\ref{alg:WHInter_core}. Instead of scanning through all features to build the working set, WHInter first identifies branches that are guaranteed to contain no violating feature. These branches are identified via the evaluation of a \textit{branch bound} $\eta(\Xb_j, \Thetab^{ref}_j, \phib, \mb_j^{ref})$ (line 13) which is described in Section \ref{subsec:BB}. The branch bound is cheap to evaluate since it solely depends on main effects and not on their numerous interactions. Moreover, it is designed to efficiently rule out branches thanks to the exploitation of the shared structure among features in a branch, as well as the correlation among dual variables for two sufficiently close points in the optimisation path. In cases where a branch cannot be ruled out, features in the branch are considered one by one to build the working set, which is very computationally expensive. In order to reduce this cost, we cast the problem as a variant of the Maximum Inner Product Search (MIPS) problem, which is described in Section \ref{subsec:MIPS}. If no violating feature is identified then the algorithm has converged. Otherwise, a new candidate solution is obtained by solving problem~(\ref{eq:primal}) restricted to the features in the working set, and the process is repeated until no violating feature remains. While any solver can be used to solve the restricted problem, we implemented in WHInter a coordinate descent approach with safe pruning.

%$\mb_j^{ref}= \underset{k: \,  Z_{\tau(j, k)} \notin \Mcal_\lambda}{\max} \left|\Zb_{\tau(j, k)} ^\top \phib^{ref} \right|$

\begin{algorithm}
\caption{WHInter}
\label{alg:WHInter_core}
\begin{algorithmic}[1]
\REQUIRE $\Xb \in \{0, 1\}^{n \times p},\, \yb \in \RR^n$,\, $\lambda_1 > \dots > \lambda_T.$
\ENSURE  $(\Wcal, \wb_\Wcal^*, b^*)_t$ for each $\lambda_t$

\STATE $\thetab \leftarrow -\nabla F(\zero_n)$
\FOR{$j$ in $\bbr{p}$}
\STATE $\Thetab_j^{ref} \leftarrow \bm \theta$
\ENDFOR
\STATE  $\Wcal, \mb^{ref} \leftarrow \texttt{update\_W}(\Xb, \phib, \bbr{p}, \lambda_1, \emptyset)$ \Comment{See Section~\ref{subsec:MIPS}}

\FOR{$t=1$ to $T$} %\Comment{Solve the regularisation path}

%\STATEx \hspace{0.7em} \textcolor{gray}{\#Pre-Solve}
\STATE $\wb_{\Wcal}^*, b^*\leftarrow \underset{\wb_{\Wcal}, b}{\text{argmin }} P_{\Zb_\Wcal, \lambda_t} (\wb_{\Wcal}, b)$ \Comment{Pre-Solve}
%\STATE $\phib \leftarrow  \yb - \Zb_\Wcal \wb_{\Wcal} - b \one_n$ 
\STATE  $\thetab \leftarrow -\nabla F(\Zb_\Wcal \wb_{\Wcal}^* + b^*\one_n)$.
\STATE $\Wcal, \bm m^{ref} \leftarrow \texttt{clean\_W}(\Wcal, \lambda_t, \thetab, \Thetab^{ref}, \bm m^{ref}$)
%\vspace{0.5em}
\WHILE {true}
%\STATEx \hspace{1.7em} \textcolor{gray}{\#Identify the working set}
\STATE $\Vcal \leftarrow \emptyset$ \Comment{Identify violated branches}
\FOR{$j$ in $\bbr{p} $}
%\STATE Compute $\eta(\Xb_j, \phib, \alpha, \Thetab^{ref}_j, \mb_j^{ref})$
\IF{$\eta(\Xb_j, \Thetab^{ref}_j, \phib, \mb_j^{ref}) >\lambda_t$} \Comment{See Section~\ref{subsec:BB}}
\STATE $\Vcal \leftarrow\Vcal \cup \{j\}$
\STATE $\Thetab^{ref}_j \leftarrow \phib$
\ENDIF
\ENDFOR
%\STATE $\Scal \leftarrow \cbr{\tau(b, k): b \in \Vcal, k \in \bbr{ p}, \left| \Zb_{\tau(b, k)}^\top\phib\right| \geq 
%\lambda}$
%\FOR{$b$ in $\Vcal $}
%\STATE $\mb_b^{ref} \leftarrow \underset{\tau(b, k) \notin \Scal \cup \Wcal}\max \left|\Zb_{\tau(b, k)}^\top\phib\right|$
%\ENDFOR
% \STATE $\Wcal^\prime \leftarrow \Wcal \cup \Scal$
%\vspace{0.5em}
%\STATEx \hspace{1.7em} \textcolor{gray}{\#Identify the working set}
\STATE $\Wcal^\prime, \mb^{ref}_{\Vcal} \leftarrow {\texttt{update\_W}}(\Xb, \phib, \Vcal, \lambda_t, \Wcal)$ \Comment{See Section~\ref{subsec:MIPS}}
\STATE \textbf{if} $\max_{i\in\Wcal'} \abs{\Zb_i^\top \theta} \leq \lambda$ \textbf{then} Break \textbf{else} $\Wcal \leftarrow \Wcal^\prime$
%\vspace{0.5em}
%\STATEx \hspace{1.7em} \textcolor{gray}{\#Solve subproblem}
\STATE $\wb_{\Wcal}^*, b^*\leftarrow \underset{\wb_{\Wcal}, b}{\text{argmin }} P_{\Zb_\Wcal, \lambda_t} (\wb_{\Wcal}, b)$ \Comment{Solve subproblem}
\STATE  $\thetab \leftarrow -\nabla F(\Zb_\Wcal \wb_{\Wcal}^* + b^*\one_n)$.

%\vspace{0.5em}
%\STATEx \hspace{1.7em} \textcolor{gray}{\#Remove inactive features from the working set}
\STATE $\Wcal, \bm m^{ref} \leftarrow \texttt{clean\_W}(\Wcal, \lambda_t, \thetab, \Thetab^{ref}, \bm m^{ref}$)
%\vspace{0.5em}
\ENDWHILE
\STATE $(\Wcal, \wb_\Wcal^*, b^*)_k \leftarrow (\Wcal, \wb_\Wcal^*, b^*)$
\ENDFOR

\vspace{0.5em}
\hrule
\vspace{0.5em}

%\STATE \textbf{function} \texttt{remove\_inactive}()
%\FOR{$i$ in $\Wcal$}
%\IF{$\left|\Zb_i^\top \theta\right| < \lambda_t$}
%\STATE Remove $\{i\}$ from $\Wcal$
%\FOR{$b$ in branch($i$)}
%\STATE \textbf{if} $\mb_b^{ref} < \left|\Zb_i^\top \Thetab^{ref}_b \right|$ \textbf{then} $\mb_b^{ref} \leftarrow \left|\Zb_{i}^\top \Thetab^{ref}_b \right|$
%\ENDFOR
%\ENDIF
%\ENDFOR

\STATE \textbf{function} \texttt{clean\_W}($\Wcal, \lambda, \thetab, \Thetab^{ref}, \bm m^{ref}$)
\STATE \hspace{1em} \textbf{for} $i$ in $\Wcal$ \textbf{do}
\STATE \hspace{2em} \textbf{if} $\left|\Zb_i^\top \thetab\right| < \lambda$ \textbf{then}
\STATE \hspace{3em} Remove $\{i\}$ from $\Wcal$
\STATE \hspace{3em} \textbf{for} $b$ in branch($i$) \textbf{do}
\STATE \hspace{4em} \textbf{if} $\mb_b^{ref} < \left|\Zb_i^\top \Thetab^{ref}_b \right|$ \textbf{then} $\mb_b^{ref} \leftarrow \left|\Zb_{i}^\top\Thetab^{ref}_b \right|$
\STATE \hspace{1em} \textbf{return} $\Wcal$, $\bm m^{ref}$

\end{algorithmic}
\end{algorithm}

\subsection{The Branch bound $\eta$}
\label{subsec:BB}
%The branch bound $\eta$ is derived from two main ideas. One of them is the sharedFirst, The features in a branch share a similar structure, i.e, for a fixed index $j \in \bbr{ p}$, $\Zb_{\tau(j, k)} \leq \Xb_j$ for all $k \in \bbr{ p}$.
As WHInter iterates, it produces candidate solutions $(\wb^*, b^*)$ and corresponding dual variables $\thetab$ (lines 20 and 21 of Algorithm \ref{alg:WHInter_core}). For two sufficiently close iterations, or for two problems with sufficiently close regularisation parameters, the candidate solutions are likely to be \textit{close} to one another, as well as the corresponding dual variables. WHInter exploits this intuition to speed up the identification of the working set from an iteration to another or from one problem to another. The following results relate the criteria used to identify the working set (line 3 of Algorithm~\ref{alg:working_set}) for two distinct dual variables.
\begin{Lemma}
\label{lem:bound}
For any $\Xb\in\cbr{0,1}^{n\times p}$, $\vb\in\RR_+^n$, $\phib_1, \phib_2 \in \RR^n$, $j\in\bbr{p}$, $\Ical\subset\bbr{p}$ and $\alpha\in\RR$, the following holds:
\begin{equation}\label{eq:lem1}
%\underset{k \in \Ical}{\max} \abs{ \phib^\top (\vb \odot \Xb_k)} \leq \eta\br{\vb, \phib, \alpha, \phib_0, \underset{k \in \Ical}{\max} \abs{ \phib_0^\top \Xb_j \odot \Xb_k}}\,,
\underset{k \in \Ical}{\max} \abs{ \phib_2^\top (\vb \odot \Xb_k)} \leq \abs{\alpha} \underset{k \in \Ical}{\max} \abs{ \phib_1^\top (\vb \odot \Xb_k)} + \zeta(\phib_2 - \alpha \phib_1, \vb)\,,
\end{equation}
%where
%$$
%\forall \vb\in \RR^n, \forall m\in\RR,\quad \eta(\vb, \phib, \alpha, \phib_0, m) =\abs{\alpha} m + \zeta(\phib - \alpha \phib_0, \vb)\,,
%$$
where
$$
\forall (\ub, \vb) \in \RR^n \times \RR_+^n\,,\quad \zeta(\ub, \vb) =  \max\br{ \displaystyle{\sum_{i : \ub_i > 0} \ub_i \vb_i }, \displaystyle{-\sum_{i : \ub_i < 0} \ub_i \vb_i} } \,.
$$
\end{Lemma}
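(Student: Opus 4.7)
The plan is to write $\phib_2 = \alpha \phib_1 + (\phib_2 - \alpha\phib_1)$, apply the triangle inequality, and then exploit the special structure that $\vb$ has non-negative entries while $\Xb_k \in \{0,1\}^n$ to control the residual term uniformly over $k \in \Ical$.

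Concretely, for any fixed $k \in \Ical$, I would set $\ub = \phib_2 - \alpha \phib_1$ and write
\[
\phib_2^\top(\vb \odot \Xb_k) = \alpha\, \phib_1^\top(\vb \odot \Xb_k) + \ub^\top(\vb \odot \Xb_k),
\]
so that by the triangle inequality
\[
\abs{\phib_2^\top(\vb \odot \Xb_k)} \leq \abs{\alpha}\abs{\phib_1^\top(\vb \odot \Xb_k)} + \abs{\ub^\top(\vb \odot \Xb_k)}.
\]

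Then the core step, which is the only genuinely structural part of the argument, is to bound $\abs{\ub^\top(\vb \odot \Xb_k)}$ by $\zeta(\ub, \vb)$ in a way that is uniform in $k$. Since $\Xb_{i,k} \in \{0,1\}$ and $\vb_i \geq 0$, the coefficient $\vb_i \Xb_{i,k}$ lies in $[0,\vb_i]$. Splitting the sum over $\{i : \ub_i > 0\}$ and $\{i : \ub_i < 0\}$, the positive contribution is at most $\sum_{i : \ub_i > 0} \ub_i \vb_i$ (attained when $\Xb_{i,k}=1$ on the positive indices) and the negative contribution is at least $\sum_{i : \ub_i < 0} \ub_i \vb_i$ (attained when $\Xb_{i,k}=1$ on the negative indices). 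Hence
\[
-\sum_{i : \ub_i < 0} \ub_i \vb_i \;\leq\; \ub^\top(\vb \odot \Xb_k) \;\leq\; \sum_{i : \ub_i > 0} \ub_i \vb_i,
\]
which gives $\abs{\ub^\top(\vb \odot \Xb_k)} \leq \zeta(\ub, \vb)$. Crucially, this bound does not depend on $k$.

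To conclude, I would take the maximum over $k \in \Ical$ on both sides of the pointwise inequality. Since $\zeta(\ub,\vb)$ is independent of $k$, the maximum distributes to give exactly \eqref{eq:lem1}. The only mild subtlety is the uniformity of the $\zeta$ bound in $k$; once that is established the rest is bookkeeping, so I do not anticipate any real obstacle. Note also that the index $j$ in the statement plays no role in the inequality itself and can be ignored.
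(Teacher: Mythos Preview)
Your proof is correct and follows essentially the same route as the paper: decompose $\phib_2 = \alpha\phib_1 + (\phib_2 - \alpha\phib_1)$, apply the triangle inequality, and bound the residual term by relaxing $\Xb_k$ to the worst-case vector in $\{0,1\}^n$, which is exactly what produces $\zeta$. One small slip to fix: in your displayed two-sided inequality the lower bound should be $\sum_{i:\ub_i<0}\ub_i\vb_i$ (a non-positive quantity), not $-\sum_{i:\ub_i<0}\ub_i\vb_i$; your verbal reasoning just before it is correct, and with that sign corrected the conclusion $\abs{\ub^\top(\vb\odot\Xb_k)}\leq\zeta(\ub,\vb)$ follows exactly as you intend.
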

The proof of Lemma~\ref{lem:bound} is postponed to Appendix~\ref{subsec:prooflemma}. It is based on the decomposition $\phib_2 = \alpha \phib_1 + \left( \phib_2 - \alpha \phib_1 \right)$, and exploits the tree structure among features in a branch.
%In particular, the term $ \zeta(\phib_2 - \alpha \phib_1, \vb)$ represents the maximum absolute inner product achievable between $\phib_2 - \alpha \phib_1$ and \textit{any} possible vector $\tilde{\vb} \in \RR_+^n$ that satisfies $\tilde{\vb}_i \leq \vb_i$ for $i\in\bbr{n}$. Interestingly, the quantity $\zeta(\phib, \vb)$ is a term of the Safe Pattern Pruning criterion in \cite{Nakagawa2016SPP}. The key difference in WHInter is that $\zeta$ is applied to a difference of dual variables $\phib_2 - \alpha \phib_1$ and not to a single dual variable $\phib$, which leads to a tighter bound.
To exploit Lemma~\ref{lem:bound} in WHInter, we define for $\alpha\in\RR$ the function
\begin{equation}\label{eq:eta}
\forall\br{\vb,\thetab_1,\thetab_2,m} \in \RR_+^n \times \RR^n\times \RR^n \times \RR\,,\quad\eta_\alpha\br{\vb,\thetab_1,\thetab_2,m} = \abs{\alpha} m + \zeta\br{\phib_2 - \alpha \phib_1, \vb} \,,
\end{equation}
and we maintain an active set $\Wcal \subset \bbr{D}$, a matrix $\Thetab^{ref} \in \RR^{n\times p}$ that contains \textit{reference dual variables} $\Thetab^{ref}_j \in \RR^n$ for each branch $j\in \bbr{p}$, and the vector $\mb^{ref}\in\RR^p$ defined by:
\begin{equation}\label{eq:mref}
\forall j\in\bbr{p},\quad \mb^{ref}_j = \underset{k \in \bbr{p} : \tau(j,k)\notin \Wcal}{\max} \abs{ \Zb_{\tau(j,k)}^\top \Thetab_j^{ref}} \,.
\end{equation}
% We can then derive the following criterion to identify branches with candidates variables for inclusion in the working set.
We now state our pruning theorem which allows to identify branches which are guaranteed to not contain any violating feature (line 13 of algorithm~\ref{alg:WHInter_core}):
\begin{Theorem}[Branch pruning]
\label{th:eta}
For any $\Thetab^{ref} \in \RR^{n\times p}$, $\Wcal\subset\bbr{p}$, $j\in\bbr{p}$, let $\mb^{ref}_j \in \RR_+$ be given by (\ref{eq:mref}). Then for any $\phib \in \RR^n$, $\alpha \in \RR$ and $\lambda>0$, if
\begin{equation}\label{eq:eta1}
\eta_{\alpha}\br{\Xb_j,\Thetab_j^{ref}, \thetab, \mb_j^{ref}} < \lambda \,,
\end{equation}
then any feature from branch $j$ that belongs to the working set $\left\{i \in \bbr{D}: \left|\Zb_{i}^\top \thetab \right| \geq \lambda \right\}$ is already in $\Wcal$. This holds in particular if
\begin{equation}\label{eq:eta2}
\eta_{min}\br{\Xb_j,\Thetab_j^{ref}, \thetab, \mb_j^{ref}}  := \min_{\alpha\in\RR }\eta_{\alpha}\br{\Xb_j,\Thetab_j^{ref}, \thetab, \mb_j^{ref}} < \lambda \,.
\end{equation}
\end{Theorem}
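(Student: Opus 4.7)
The plan is to apply Lemma~\ref{lem:bound} directly, the only subtlety being the choice of parameters that makes the statement of Lemma~\ref{lem:bound} match the definition of $\eta_\alpha$ and the branch structure. Specifically, I would take $\vb := \Xb_j$ (which lies in $\RR_+^n$ since $\Xb$ is binary), $\phib_1 := \Thetab_j^{ref}$, $\phib_2 := \thetab$, and the index set $\Ical := \cbr{k\in\bbr{p}:\tau(j,k)\notin\Wcal}$. The crucial identity driving the substitution is that interactions factor through the AND structure: for any $k \in \bbr{p}$,
\[
\vb \odot \Xb_k = \Xb_j \odot \Xb_k = \Zb_{\tau(j,k)}\,,
\]
so $\phib_2^\top(\vb \odot \Xb_k) = \Zb_{\tau(j,k)}^\top \thetab$ and $\phib_1^\top(\vb \odot \Xb_k) = \Zb_{\tau(j,k)}^\top \Thetab_j^{ref}$.

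With this substitution, Lemma~\ref{lem:bound} yields, for every $\alpha\in\RR$,
\[
\max_{k\in\Ical}\abs{\Zb_{\tau(j,k)}^\top\thetab}\;\leq\;\abs{\alpha}\max_{k\in\Ical}\abs{\Zb_{\tau(j,k)}^\top\Thetab_j^{ref}}+\zeta\br{\thetab-\alpha\Thetab_j^{ref},\Xb_j}\,.
\]
By the very definition (\ref{eq:mref}) of $\mb_j^{ref}$, the inner maximum on the right equals $\mb_j^{ref}$, so the right-hand side is exactly $\eta_\alpha\br{\Xb_j,\Thetab_j^{ref},\thetab,\mb_j^{ref}}$.

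From here the result is immediate. If (\ref{eq:eta1}) holds, then for every $k\in\Ical$ we get $\abs{\Zb_{\tau(j,k)}^\top\thetab}<\lambda$, meaning no such $\tau(j,k)$ belongs to the set $\cbr{i\in\bbr{D}:\abs{\Zb_i^\top\thetab}\geq\lambda}$. Taking the contrapositive, any violating feature in branch $j$ must belong to $\Wcal$. The second statement (\ref{eq:eta2}) follows because its hypothesis supplies an $\alpha\in\RR$ for which (\ref{eq:eta1}) is satisfied.

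The proof is short and essentially a pattern-match; the only potential pitfall is correctly lining up the index set $\Ical$ with the definition of $\mb_j^{ref}$ (which ranges over $k$ with $\tau(j,k)\notin\Wcal$, not over all of $\bbr{p}$)—this restriction is precisely what ensures the bound is tight enough to prune effectively while remaining valid. No other routine calculation is needed.
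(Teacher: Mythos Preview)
Your proof is correct and follows exactly the same approach as the paper's: you make the identical choices $\vb=\Xb_j$, $\phib_1=\Thetab_j^{ref}$, $\phib_2=\thetab$, $\Ical=\cbr{k\in\bbr{p}:\tau(j,k)\notin\Wcal}$ in Lemma~\ref{lem:bound}, and the remainder of the argument matches. If anything, your handling of (\ref{eq:eta2}) is slightly cleaner, since you observe that $\eta_{min}<\lambda$ already guarantees the existence of some $\alpha$ with $\eta_\alpha<\lambda$, sidestepping the need to invoke that the minimum is actually attained.
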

\begin{proof}
Take $\Ical = \cbr{k \in \bbr{p} : \tau(j,k)\notin \Wcal}$, $\vb=X_j$, $\thetab_1 = \Thetab^{ref}_j$ and $\thetab_2 = \thetab$ in Lemma~\ref{lem:bound}. Then if (\ref{eq:eta1}) holds, we deduce from (\ref{eq:lem1}) that
$$
\underset{k \in \bbr{p} : \tau(j,k)\notin \Wcal}{\max} \abs{ \Zb_{\tau(j,k)}^\top \thetab} < \lambda \,.
$$
This shows that there is no feature $i$ in branch $j$ such that $\left|\Zb_{i}^\top \thetab \right| \geq \lambda$ \textit{and} $i$ is not already in $\Wcal$. The fact that for fixed arguments, the function $\alpha\rightarrow\eta_\alpha$ has a minimum $\alpha^*\in\RR$ is shown in Appendix \ref{subsec:etamin}, along with with an algorithm to compute it in $O\br{\nm{\Xb_j}_0 \ln \nm{\Xb_j}_0}$ operations. Since the statement is true for any $\alpha$, it is \emph{a fortiori} true for $\alpha^*$.
\end{proof}

Theorem~\ref{th:eta} provides criteria (\ref{eq:eta1}) and (\ref{eq:eta2}) that can be computed for each branch $j$, and which if satisfied allow to skip the search for violating variables in the branch. Importantly, the features that are already in the working set $\Wcal$ are not taken into account to compute the criterion for a given branch. This subtlety allows to rule out branches even if they already contain features that were previously incorporated in the working set. Note that the reference dual variable for branch $j$, i.e, $\Thetab_j^{ref}$, is kept unchanged as long as branch $j$ is pruned, and is otherwise updated to the latest dual variable (line 15 of Algorithm~\ref{alg:WHInter_core}). As $\mb^{ref}_j$depends on the reference dual variable instead of the current one, it is solely reevaluated each time the reference residual is updated (line 18 of Algorithm~\ref{alg:WHInter_core}) or when a feature from branch $j$ leaves the working set (line 22 of Algorithm~\ref{alg:WHInter_core}) .

%As iterations proceed, we keep $\Thetab_j^{ref}$ unchanged as long as branch $j$ is pruned (see below), otherwise we update $\Theta_j^{ref}$ to the latest dual variable (line 14 of Algorithm~\ref{alg:WHInter_core}). As for $\mb^{ref}_j$, it is updated each time $\Thetab^{ref}$ or $\Wcal$ changes (lines 4, 8, 15, 19 of Algorithm~\ref{alg:WHInter_core}).

Criterion (\ref{eq:eta2}) is the most stringent one, and therefore the most efficient one to prune branches, but it takes $O\br{\nm{\Xb_j}_0 \ln \nm{\Xb_j}_0}$ operations to compute. In order to balance computational complexity of the bound with its efficacy to prune branches, criterion (\ref{eq:eta1}) can be used as an alternative for a specific $\alpha$ value. One simple choice is to just take $\alpha=1$, which leads to the criterion
\begin{equation}\label{eq:alpha1}
\eta_{1}\br{\Xb_j,\Thetab_j^{ref}, \thetab, \mb_j^{ref}} = \mb_j^{ref} + \zeta\br{\Thetab_j^{ref} - \thetab, \Xb_j} < \lambda\,.
\end{equation}
Alternatively, a simple heuristic to expect a more efficient pruning is to choose an $\alpha$ that minimises $\nm{\br{\thetab - \alpha \Thetab_j^{ref}}\odot \Xb_j}_2$, i.e,
\begin{equation}\label{eq:alpha2}
\alpha_{\ell_2} = \frac{\phib^\top \br{\Thetab^{ref}_j\odot\Xb_j}}{\nm{ \Thetab^{ref}_j\odot\Xb_j}_2^2}\,.
\end{equation}
$\eta_{\alpha_{\ell_2}}$ is expected to be more effective than $\eta_1$ since it is reasonable to expect that $\zeta\br{\thetab - \alpha_{\ell_2} \Thetab_j^{ref} , \Xb_j}$ is smaller than  $\zeta\br{\thetab - \Thetab_j^{ref} , \Xb_j}$. Overall, computing criterion (\ref{eq:eta2}) for $\alpha=1$ as in (\ref{eq:alpha1}), or for $\alpha=\alpha_{\ell_2}$ as in (\ref{eq:alpha2}), is an $O(\nm{\Xb_j})$ operation. Since computing $\zeta(\phib - \alpha \Thetab^{ref}_j, \Xb_j)$ for a fixed $\alpha$ is also a $O(\nm{\Xb_j})$ computation, the total cost of identifying branch $j$ as violated is $O(\nm{\Xb_j})$ for criterion (\ref{eq:alpha1}), compared to $O\br{\nm{\Xb_j}_0 \ln \nm{\Xb_j}_0}$ for criterion (\ref{eq:eta2}). In Algorithm~\ref{alg:WHInter_core}, the notation $\eta$ refers to a user-defined function among $\eta_1, \eta_{\alpha_{\ell_2}}$ or $\eta_{min}$.

\subsection{Updating the working set}
\label{subsec:MIPS}
When some branches $\Vcal\subset\bbr{p}$ cannot be pruned, the simultaneous updates of the working set $\Wcal$ and of $\mb_\Vcal^{ref}$ requires scanning through all features in the branches $\Vcal$  (lines 5 and 18 in Algorithm~\ref{alg:WHInter_core}). In what follows we discuss strategies to make these updates efficient. For that purpose, let us first notice that:
\begin{align*}
\forall j, k \in \bbr{p} \,, \left|\Zb_{\tau(j, k)} ^\top \thetab \right| &= \left|\left(\Xb_j \odot \Xb_k \right)^\top \thetab\right|\\
&= \left| \left( \Xb_j \odot \thetab\right)^\top \Xb_k \right|\\
&= \left|\bm Q_j^\top \Xb_k \right| \,,
\end{align*}
where for any $j \in \bbr{p}\,,\bm Q_j = \bm X_j \odot \thetab$. This allows us to write the updates of $\Wcal$ and $\bm m^{ref}_\Vcal$ as:
\begin{equation}
\label{eq:updates}
\begin{cases}
\Wcal^\prime& = \Wcal \cup \cbr{\tau(j, k): j \in \Vcal, k\in \bbr{p}, \left|\bm Q_j^\top \bm X_k \right|\geq \lambda} \,,\\
\bm m_j^{ref}& = \underset{k: \, \left|\bm Q_j^\top \bm X_k\right|< \lambda}{\max} \left|\bm Q_j^\top \bm X_k\right|,\, \forall j \in \Vcal \,.
\end{cases}
\end{equation}
This highlights the fact that the updates of the working set $\Wcal$ and of $\bm m_\Vcal^{ref}$ can be cast as particular variants of the Maximum Inner Product Search (MIPS) problem. MIPS aims at finding a vector in a database of probes which maximises the inner product with a given query vector.
If we consider $\bm X$ as a set of probes, and $\bm Q_j$ as a query, then (\ref{eq:updates}) is a variant of MIPS where (i) the set of probe vectors satisfies some constraints and is not known upfront and (ii) the problem is a maximum \textit{absolute} inner product search. The update of $\Wcal$ involves what is sometimes referred to as \textit{above}-$\lambda$-MIPS problems where again, maximum \textit{absolute} inner products  are considered.\\

The interest of casting these updates as variants of MIPS problems is to exploit the ideas developed in the literature for solving these problems efficiently. \citet{Teflioudi2016LEMP} and \citet{Fontoura2011evaluation} give good overviews of MIPS solvers developed for recommender systems and information retrieval applications respectively. In both cases, the proposed methods rely on two main ideas: (i) adequate indexing techniques or data structures and (ii) pruning criteria which allow to not compute all inner products entirely. Since none of these methods can directly be applied to problem (\ref{eq:updates}) because of its specificities, we propose an appropriate algorithm based on a simple inverted index approach, which we will refer to as $IL$, and which exploits the sparsity of the problem. Another option would be to leverage pruning techniques. We detail such an attempt in Appendix~\ref{subsec:MIPS1}. However, since our preliminary results with the pruning technique were not conclusive compared to IL on the simulated and real data, we will only focus on the inverted index approach below.

\begin{algorithm}
\caption{\texttt{update\_W}}
\label{alg:MIPS2}
\begin{algorithmic}[1]
\REQUIRE $\Xb \in \{0, 1\}^{n \times p}, \,\phib \in \RR^n, \, \Qcal \subset \bbr{p}, \, \lambda \in \RR, \, \Wcal \subset \bbr{D}$
\ENSURE $\Wcal,\, \mb^{ref}$
\FOR{$j \in  \Qcal$}
\STATE{Initialise an array $\ab$ of size $p$ to zero.}
\FOR{\textbf{each} $i$ in $\supp(\Xb_j)$}
\FOR{\textbf{each} $k$ in supp($\bm x_i$)}
\STATE $\ab_k = \ab_k + \phib_i$ 	
\ENDFOR
\ENDFOR
\FOR{\textbf{each} $k$ s.t. $\ab_k \ne 0$}
\STATE \textbf{if}  $\mb^{ref}_j < \ab_k < \lambda$ \textbf{then} set $\mb^{ref}_j = \ab_k$
\STATE \textbf{if}  $\ab_k \geq \lambda$ and $\tau(j, k) \notin \Wcal$ \textbf{then} add $\tau(j, k)$ to $\Wcal$
\ENDFOR
\ENDFOR
\STATE \textbf{return} $\Wcal, \bm m^{ref}$
\end{algorithmic}
\end{algorithm}

$IL$ is detailed in Algorithm \ref{alg:MIPS2}. The inverted indices consist of $n$ lists, one for each dimension, where each list supp($\bm x_i$) records the indices of the features in $\Xb$ which have a non-zero element for the $i^{th}$ dimension. These inverted lists can be computed once for all when WHInter starts and be reused for all MIPS problems, and therefore building the inverted lists requires a negligible additional computational cost. Algorithm (\ref{alg:MIPS2}) computes inner product following a \textit{term-at-a-time} (TAAT) scheme \citep{Fontoura2011evaluation}, i.e, the inner products are accumulated simultaneously across probes and the contribution of the $i^{th}$ dimension to the inner products is entirely processed before moving to the next one.

\section{Simulation study}

\begin{figure*}
\centering
\begin{subfigure}[t]{0.315\linewidth}
	\caption{}\label{fig:WHInter_p}
	\includegraphics[scale=0.58]{./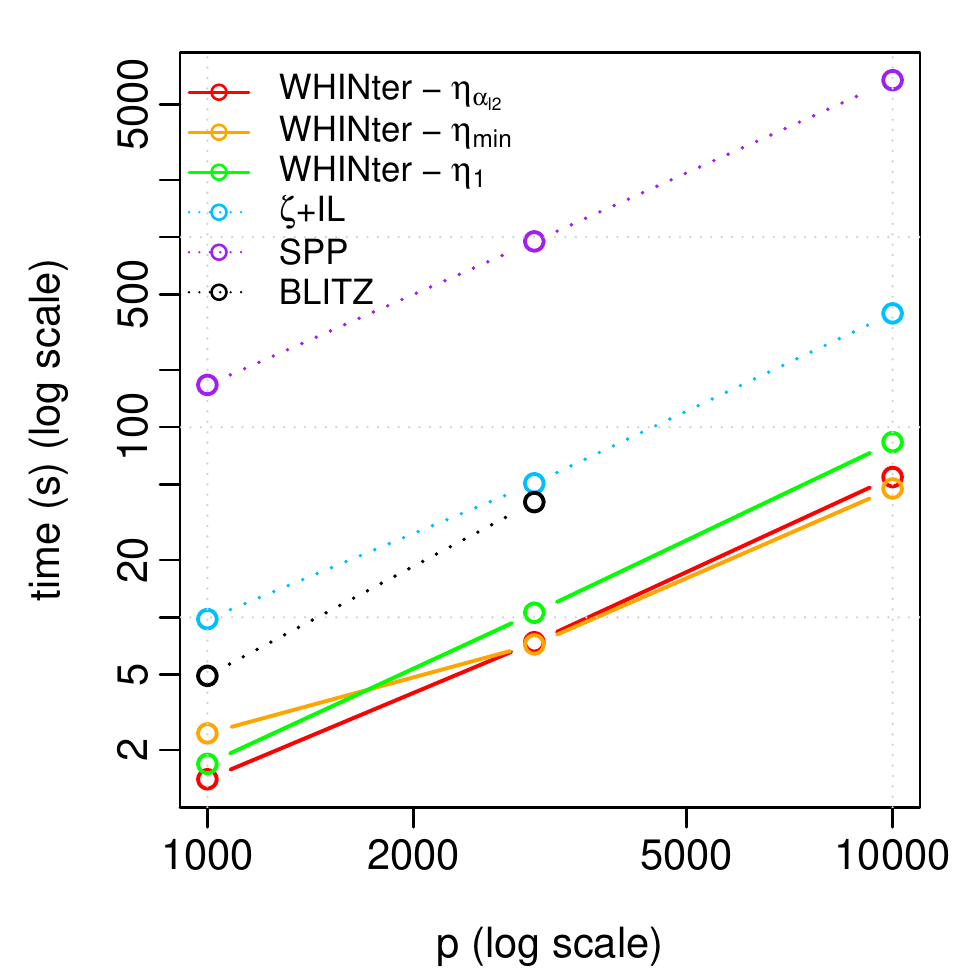}
\end{subfigure}
\;
\begin{subfigure}[t]{0.315\linewidth}
	\caption{}\label{fig:WHInter_n}
	\includegraphics[scale=0.58]{./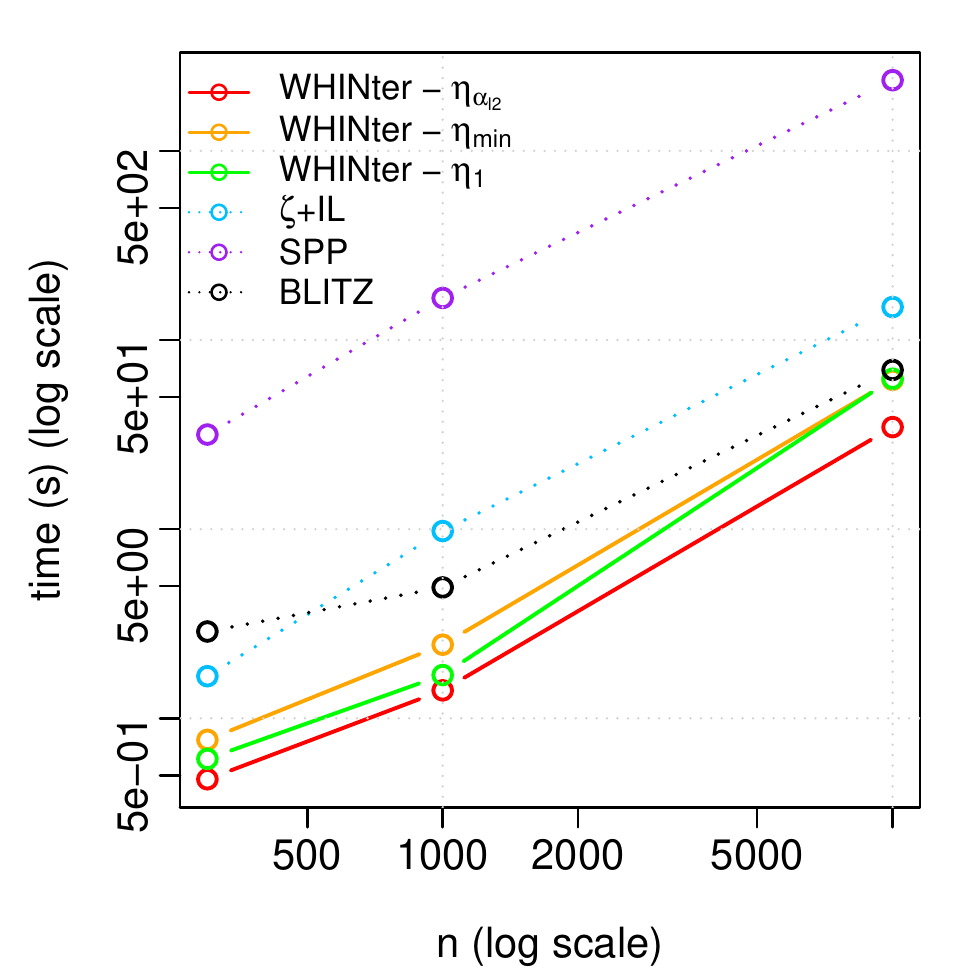}	
\end{subfigure}
\;
\begin{subfigure}[t]{0.315\linewidth}
	\caption{}\label{fig:branches1000}
	\includegraphics[scale=0.58]{./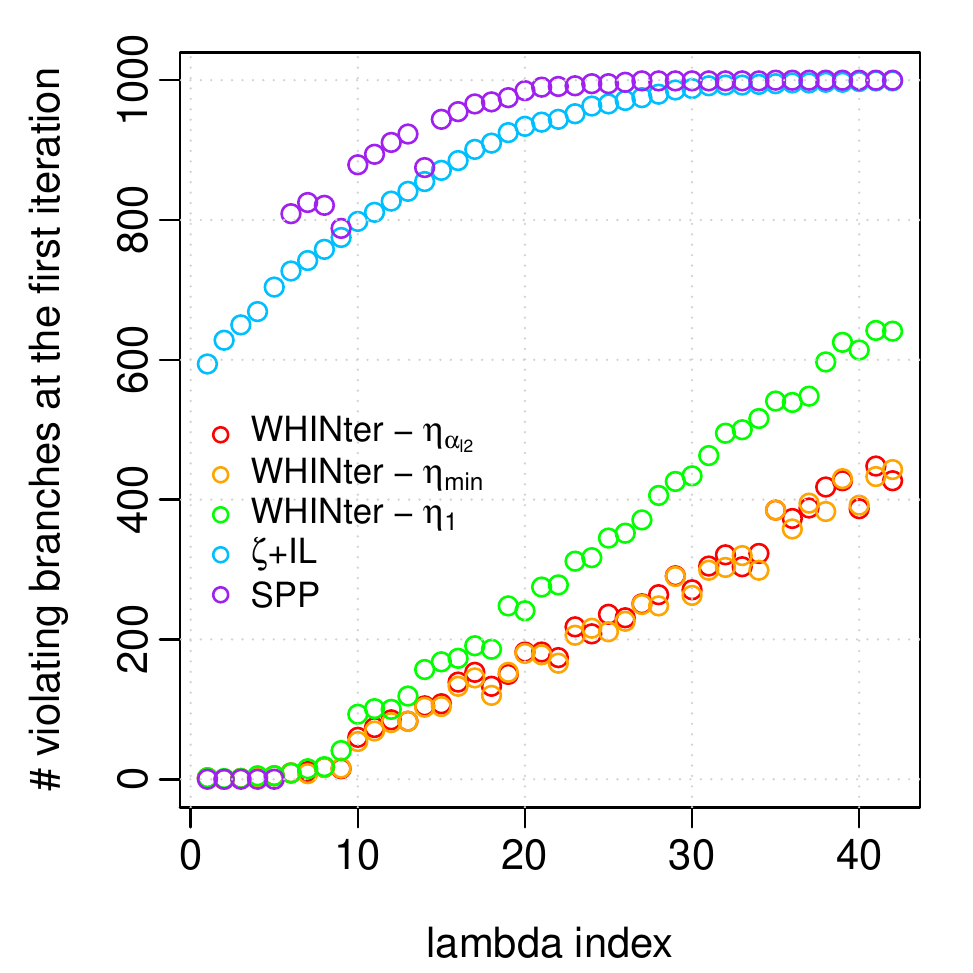}	
\end{subfigure}
\caption{\textbf{Performance comparison on simulated data for an entire regularisation path}. Comparison of WHInter with three branch pruning criteria $\eta\in\cbr{\eta_{\alpha_2},\eta_{min},\eta_1}$ to $\zeta+IL$, SPP and BLITZ.  (\subref{fig:WHInter_p}) Time in seconds for $n=1\times10^3$ fixed and $p$ varied. (\subref{fig:WHInter_n}) Time in seconds for $p=1\times10^3$ fixed and $n$ varied. (\subref{fig:branches1000}) number of branches that are \textit{not} pruned at the first iteration, as a function of $\lambda$, for $n=p=1\times10^3$.}
\label{fig:time_comparison}
\end{figure*}

We first test the performances of WHInter on synthetic LASSO datasets. We assess the performances of the different branch pruning bounds presented in \ref{subsec:BB}, i.e, $\eta_{min}$, $\eta_1$ and $\eta_{\alpha_{\ell_2}}$, and further compare WHInter to a working set method that uses the bound $\zeta(\phib, \Xb_j)$ instead of $\eta_\alpha$, but is otherwise equivalent to WHInter. We refer to this method as $\zeta + IL$. It is expected to prune less branches than WHInter but does not require to maintain $\bm m^{ref}$. We also compare WHInter to SPP \citep{Nakagawa2016SPP} and BLITZ \citep{Johnson2015BLITZ}. In our experiments, we use a slightly modified, more efficient version of the code provided by the authors of SPP (cf Appendix~\ref{subsec:breadthfirst}). As for BLITZ, since the method is not tailored for interaction problems, we first compute the matrix $\bm Z$ which is fed as input to BLITZ. For this reason we could not solve problems when $p$ is too large (e.g., $p=1\times10^4$ in the simulations) since, even in sparse format, storing $\bm Z$ requires too much memory. Importantly, the performances reported for BLITZ do not include the time required to compute $\bm Z$ from $\bm X$, which clearly advantages BLITZ compared to the other methods.

We simulate five datasets $\Xb \in \cbr{ 0, 1 }^{n \times p}$ with varying number of features and samples: three datasets with $p=1\times10^3$ fixed and $n \in \left\{3\times10^2, 1\times10^3, 1\times10^4\right\}$, and two more with $n=1\times10^3$ fixed and $p \in \left\{3\times10^3, 1\times10^4\right\}$. The features are drawn from a Bernoulli distribution with parameter $q \in [0.1, 0.5]$ itself drawn from a uniform distribution $\Ucal_{[0.1, 0.5]}$. We then randomly pick a set $\Scal$ of 100 features among the main effects and interactions and compute the response as $\yb = \Zb_\Scal \wb_\Scal^*$ where $\wb_\Scal^* \sim \Ncal(\zero_{|\Scal|}, I_{|\Scal|})$. In all experiments, the LASSO is solved for a sequence $(\lambda_t)_{t\in \bbr{T}}$, $T=100$, logarithmically spaced between $\lambda_{max}$ and $\max(0.01\lambda_{max}, \,\lambda^\prime)$ where $\lambda_{max}$
%$\lambda_{max} = \underset{j, k}{\max}\left| \Zb_{\tau(j, k)}^\top \yb \right|$
is the largest value of $\lambda$ for which at least one feature is selected, and $\lambda^\prime$ is the first $\lambda_{i}$ for which 150 features or more are selected in the model. For all methods, the time to compute $\lambda_{max}$ is included in the total time required to solve the regularisation path. In WHInter, $\lambda_{max}$ can easily be deduced from the initialisation of $\mb^{ref}$ since $\lambda_{max} = \max_{j \in \bbr{p}} \bm m_j^{ref}$. All algorithms are implemented in C++ and compiled with the \texttt{-O3} optimisation flag. The experiments are run on a 64-bit machine with Intel Core i7 Processor 2.5 GHz, 16GB of memory and 6MB of cache.

Results are shown in Figure~\ref{fig:time_comparison}. For $n=1\times10^3$ (Figure~\ref{fig:WHInter_p}), LASSO solutions are computed for 42, 32 and 28 values of $\lambda$ for $p=1\times10^3, p=3\times10^3$ and $p=1\times10^4$ respectively. In these cases smaller values of $\lambda$ result in model sizes exceeding 150 features. For the remaining settings where $p=1\times10^3$ and $n=3\times10^2$ or $n=1\times10^4$ (Figure~\ref{fig:WHInter_n}), LASSO solutions are computed for 34 and all 100 values of $\lambda$ between $\lambda_{max}$ and $0.01\lambda_{max}$, respectively. We checked that all methods return the exact same support.

In all settings, WHInter is the fastest method. Its better performance compared to $\zeta + IL$ highlights the benefit of using reference dual variables even if it implies to maintain $\bm m^{ref}$. The results also show the importance of $\alpha$, since WHInter with $\eta_{\ell_2}$ is always better ($\times 1.2$ to  $\times 1.8$) than WHInter with $\eta_1$ for example. Figure~\ref{fig:branches1000} confirms that the choice of $\alpha$ has an impact on the pruning efficiency and consequently on the performance. It shows, however, that on this experiment $\eta_{min}$ does not allow to prune many more branches than $\eta_{\ell_2}$. This explains why $\eta_{\ell_2}$ tends to outperform $\eta_{min}$, notably for large $n$, since the higher computational complexity of $\eta_{min}$ does not sufficiently enhance the pruning. We also notice that SPP is the slowest algorithm, and in particular $\zeta + IL$ is $\times 17$ faster than SPP on average.  This speed-up is mostly explained by the fact that  $\zeta + IL$ relies on inverted lists to update the working set while SPP identifies the safe set naively. 
Overall, WHInter offers a signifiant speed-up of two orders of magnitude or more compared to its safe screening counterpart.
%This paves the way for solving LASSO with interactions problems on dataset with larger scales such as those found in genomics. In the next section, we apply WHInter to real-world genomics datasets.

\section{Results on real world data}

\begin{figure}
\centering
\includegraphics[scale=0.58]{./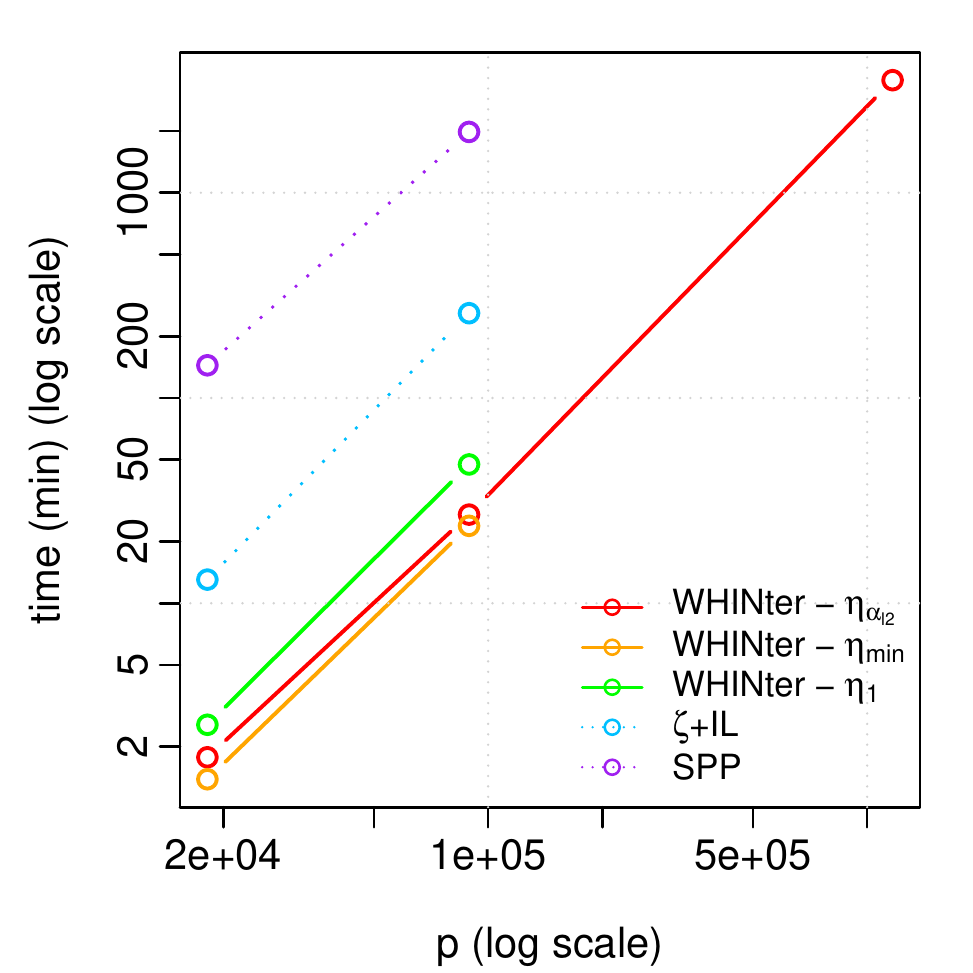}
\caption{\textbf{Performance comparison on SNPs data for an entire regularisation path}. The $y$-axis reports the total time (in minutes) required to compute the LASSO path for chromosome 22 (around 20,000 SNPs), chromosome 1 (around 90,000 SNPs) and the whole genome (around 1.2 million SNPs).}
\label{fig:WHInter_realdata1}
\end{figure}

We now illustrate the performance of the different algorithms on a real-world problem, where we want to predict the cytotoxic response of 884 lymphoblastoid cell lines split into a train ($n=620$) and a test ($n=264$) set, and characterized by about $1.2\times 10^6$ single nucleotide polymorphisms (SNP) that represent their genotypes. The data was released as part of the Dialogue on Reverse Engineering Assessment and Methods 8 (DREAM 8) toxicogenetics challenge \citep{Eduati2015Prediction}. We encode the SNP data as a binary matrix were $1$ stand for the presence of a minor allele on one or both copies of the chromosomes. As preprocessing we removed SNP with less than $5\%$ of $1$'s and corrected the data for population structure as in \citet{Price2006Principal}.
%Single Nucleotide Polyphormisms (SNPs) are positions in the DNA at which different nucleotides (usually two among A, T, C or G) can be found across individuals. We use SNPs data from 884 lymphoblastoid cell lines, split into a train ($n=620$) and a test ($n=264$) set,  sequenced by the 1000 Genomes Project \citep{Consortium2010map} along with cytotoxic responses to various chemical compounds characterised by \citet{Abdo2015Population}. Cytotoxic responses are given as the the one-tenth maximal effective concentration (EC10), i.e., the concentration of the chemical at which the cytotoxic response observed is at one tenth of its maximum. We only consider the response of the cells to phenanthroline, which we chose arbitrarily as the first column in the cytotoxic response matrix. The data was released as part of the Dialogue on Reverse Engineering Assessment and Methods 8 (DREAM 8) toxicogenetics challenge \citep{Eduati2015Prediction}. We encoded the SNPs data as a binary matrix were ones stand for the presence of a minor allele on one or both copies of the chromosomes. A minor allele is the least frequent variation (A, T, C or G) observed in the population. We filtered out the SNPS for which the presence of a minor allele in one or both copies of the DNA was less than 5 \% across individuals. We also corrected the data for population structure as in \citet{Price2006Principal}. In short, principal components of the SNPs covariance matrix were removed from the response  until the genomic control fell below 1.05. 
To focus on problems of increasing scales, we first considered the SNPs of the smallest chromosome only (chr. 22), then of the largest only (chr. 1) and finally of all chromosomes together. This leads to train matrices with $n=620$ and $p=18,168$ SNPs for chromosome 22, $p=89,027$ SNPs for chromosome 1 and $p=1,166,836$ SNPs for the whole genome. We consider a sequence of 100 regularisation parameters $\lambda$ logarithmically spaced between $\lambda_{max}$ and $0.01\lambda_{max}$, and by default stop computations as soon as 150 features or more are selected. This occurs after the $12^{th}$, the $11^{th}$ and the $9^{th}$ value of $\lambda$ for chromosome 22, chromosome 1 and all chromosomes respectively. The time required to compute the regularisation paths are shown in Fig. \ref{fig:WHInter_realdata1}.

The relative performances of the methods are the same as for the simulations. $\eta_{\alpha_{\ell_2}}$ provides a $\times 1.4$ (resp. $\times 1.8$) speed up compared to using $\eta_1$ for chromosome $22$ (resp.  chr. $1$). and compared to SPP, there is a $\times 81$ (resp. $\times 73$) speed up for chromosome 22 (resp chr. 1). In the case of the whole genome, we only ran WHInter with $\eta_{\alpha_{\ell_2}}$ which takes two days and a half. While this can seem a lot, we recall that this corresponds to a problem with roughly 680 \textit{billion} features. We did not run other methods on the whole genome since most of them are expected to take too long.

Out of curiosity, we also obtained preliminary results concerning the predictive performance of WHInter compared to a LASSO with no interactions on such high-dimensional problems. The results, presented in Figure~\ref{fig:WHInter_realdata} , suggest that interactions are relevant predictors for this data. For the chromosomes 1 and 22 taken independently, the predictive accuracy of WHInter is better than that of the simple LASSO for almost every value of $\lambda$. By contrast, for the whole genome, the LASSO clearly performs better, which may underline statistical issues due to the huge number of variables in this case \citep{Donoho2009Phase}.

\begin{figure*}
\centering
\includegraphics[scale=0.53]{./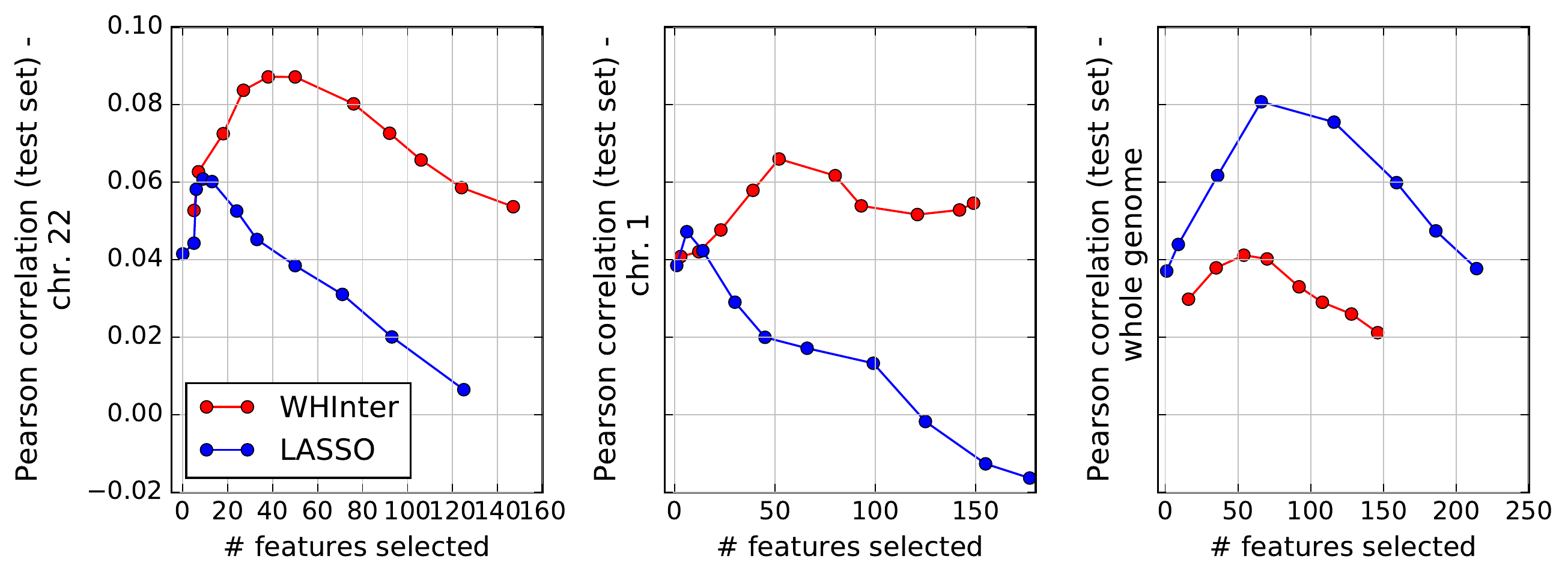}
\caption{\textbf{Predictive performance on the test set}. The $y$-axis reports the pearson correlation between the true and predicted response. The $x$-axis reports the number of selected features for the sequence of regularisation parameters tested.}
\label{fig:WHInter_realdata}
\end{figure*}

\section{Discussion}
We presented WHInter, a working set algorithm designed to solve large scale LASSO problems with interaction terms. WHInter implements a new branch pruning bound to efficiently delineate the working set among the many possible interaction variables, and a variant of MIPS solver that provides a further speed up. We showed that WHInter is up to two orders of magnitudes faster than competing approaches.
While we presented WHInter for binary data, it could also be used for data rescaled in $[0, 1]$, provided that an appropriate solver is picked for the MIPS problems. As for future work, one could exploit the recent works on approximate MIPS \citep{Shrivastava2014ALSH, Teflioudi2016LEMP} to obtain an additional speed up for the computationally intensive updates, and possibly rely on recent post selection-inference \citep{Suzuma2017SelectiveInference} frameworks to characterise the approximate solution obtained.

\section*{Acknowledgements}
We thank Nino Shervashidze for helpful discussions.

\beginsupplement
\section*{Annexes}

\subsection{Proof of Lemma~\ref{lem:bound}}\label{subsec:prooflemma}

\setcounter{Lemma}{0}
\setcounter{section}{3}
\begin{Lemma}
For any $\Xb\in\cbr{0,1}^{n\times p}$, $\vb\in\RR_+^n$, $\phib_1, \phib_2 \in \RR^n$, $j\in\bbr{p}$, $\Ical\subset\bbr{p}$ and $\alpha\in\RR$, the following holds:
\begin{equation*}\label{eq:lem1}
%\underset{k \in \Ical}{\max} \abs{ \phib^\top (\vb \odot \Xb_k)} \leq \eta\br{\vb, \phib, \alpha, \phib_0, \underset{k \in \Ical}{\max} \abs{ \phib_0^\top \Xb_j \odot \Xb_k}}\,,
\underset{k \in \Ical}{\max} \abs{ \phib_2^\top (\vb \odot \Xb_k)} \leq \abs{\alpha} \underset{k \in \Ical}{\max} \abs{ \phib_1^\top (\vb \odot \Xb_k)} + \zeta(\phib_2 - \alpha \phib_1, \vb)\,,
\end{equation*}
%where
%$$
%\forall \vb\in \RR^n, \forall m\in\RR,\quad \eta(\vb, \phib, \alpha, \phib_0, m) =\abs{\alpha} m + \zeta(\phib - \alpha \phib_0, \vb)\,,
%$$
where
$$
\forall (\ub, \vb) \in \RR^n \times \RR_+^n\,,\quad \zeta(\ub, \vb) =  \max\br{ \displaystyle{\sum_{i : \ub_i > 0} \ub_i \vb_i }, \displaystyle{-\sum_{i : \ub_i < 0} \ub_i \vb_i} } \,.
$$
\end{Lemma}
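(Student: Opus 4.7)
The plan is to reduce the bound for a single $k$ to a triangle-inequality split, handle each piece separately, and then take a maximum over $\Ical$.

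First I would fix $k\in\Ical$ and use the decomposition $\phib_2 = \alpha\phib_1 + (\phib_2-\alpha\phib_1)$ to write
$$\phib_2^\top(\vb\odot\Xb_k) \;=\; \alpha\,\phib_1^\top(\vb\odot\Xb_k) \;+\; (\phib_2-\alpha\phib_1)^\top(\vb\odot\Xb_k).$$
Applying the triangle inequality yields
$$\abs{\phib_2^\top(\vb\odot\Xb_k)} \;\leq\; \abs{\alpha}\,\abs{\phib_1^\top(\vb\odot\Xb_k)} \;+\; \abs{(\phib_2-\alpha\phib_1)^\top(\vb\odot\Xb_k)}.$$

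Next I would bound the second term by $\zeta(\phib_2-\alpha\phib_1,\vb)$, independently of $k$. Setting $\ub = \phib_2 - \alpha\phib_1$, and using that $\Xb_k\in\{0,1\}^n$ and $\vb\in\RR_+^n$, the inner product is a nonnegatively-weighted sum over the support of $\Xb_k$:
$$\ub^\top(\vb\odot\Xb_k) \;=\; \sum_{i\in\supp(\Xb_k)} \ub_i\,\vb_i \;=\; \underbrace{\sum_{i\in\supp(\Xb_k),\,\ub_i>0}\ub_i\vb_i}_{=:A\ge 0} \;+\; \underbrace{\sum_{i\in\supp(\Xb_k),\,\ub_i<0}\ub_i\vb_i}_{=:B\le 0}.$$
For any reals $A\geq 0\geq B$ one has $\abs{A+B}\leq\max(A,-B)$, hence
$$\abs{\ub^\top(\vb\odot\Xb_k)} \;\leq\; \max\!\Big(\!\sum_{i\in\supp(\Xb_k),\,\ub_i>0}\ub_i\vb_i,\; -\!\!\sum_{i\in\supp(\Xb_k),\,\ub_i<0}\ub_i\vb_i\Big).$$
Dropping the restriction to $\supp(\Xb_k)$ can only enlarge each of the two sums (all terms being nonnegative, by the sign conditions together with $\vb\ge 0$), so this is at most $\zeta(\ub,\vb)=\zeta(\phib_2-\alpha\phib_1,\vb)$. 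This is the step that uses the special structure $\Xb_k\in\{0,1\}^n$ (the ``tree structure'' remark in the paper), and I expect it to be the main conceptual hurdle, though it is short once one writes it out.

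Combining the two bounds gives, for every $k\in\Ical$,
$$\abs{\phib_2^\top(\vb\odot\Xb_k)} \;\leq\; \abs{\alpha}\,\abs{\phib_1^\top(\vb\odot\Xb_k)} + \zeta(\phib_2-\alpha\phib_1,\vb).$$
Since the second term on the right is independent of $k$, I can upper bound $\abs{\phib_1^\top(\vb\odot\Xb_k)}$ by $\max_{k'\in\Ical}\abs{\phib_1^\top(\vb\odot\Xb_{k'})}$ and then take the maximum over $k\in\Ical$ on the left, which yields the claimed inequality. No step beyond the sign-splitting argument requires work; the rest is bookkeeping.
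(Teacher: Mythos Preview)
Your proof is correct and follows essentially the same approach as the paper: the decomposition $\phib_2 = \alpha\phib_1 + (\phib_2-\alpha\phib_1)$ followed by the triangle inequality, and then bounding the second term by $\zeta$ via the observation that $\max_{\Xb\in\{0,1\}^n}\abs{\ub^\top(\vb\odot\Xb)} = \zeta(\ub,\vb)$. Your sign-splitting argument with $\abs{A+B}\leq\max(A,-B)$ for $A\geq 0\geq B$ is simply a more explicit justification of that last identity, which the paper states without elaboration.
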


\begin{proof}
With the notations of Lemma~\ref{lem:bound} , we have:
\begin{align*}
\underset{k \in \Ical}{\max} \abs{ \phib_2^\top (\vb \odot \Xb_k)}
&\leq \underset{k \in \Ical}{\max} \abs{ \alpha \phib_1^\top (\vb \odot \Xb_k) +  (\phib_2 - \alpha\phib_1)^\top (\vb \odot \Xb_k)} \\
&\leq \abs{\alpha}\underset{k \in \Ical}{\max} \abs{ \phib_1^\top (\vb \odot \Xb_k)} + \underset{k \in \Ical}{\max} \abs{(\phib_2 - \alpha\phib_1)^\top (\vb \odot \Xb_k)} \\
&\leq \abs{\alpha}\underset{k \in \Ical}{\max} \abs{ \phib_1^\top (\vb \odot \Xb_k)} + \underset{\Xb \in \cbr{0,1}^n}{\max} \abs{(\phib_2 - \alpha\phib_1)^\top (\vb \odot \Xb)} \\
&= \abs{\alpha}\underset{k \in \Ical}{\max} \abs{ \phib_1^\top (\vb \odot \Xb_k)} +  \zeta(\phib_2 - \alpha \phib_1, \vb) \,.
\end{align*}
\end{proof}

\subsection{Computing $\eta_{min}$}\label{subsec:etamin}
\label{sec:alphamin}
In this section we characterise the existence and an algorithm to compute, for any fixed $\br{\vb,\thetab,\thetab',m} \in \RR_+^n \times \RR^n\times \RR^n \times \RR$:
\begin{equation}\label{eq:minalpha}
\eta_{min}\br{\vb,\thetab,\thetab',m} := \min_{\alpha\in\RR} \eta_{\alpha}\br{\vb,\thetab,\thetab',m} \,,
\end{equation}
where $\eta_\alpha$ is defined in Section~\ref{subsec:BB}. For that purpose, let us introduce for any $\alpha\in\RR$ the functions:
$$
\begin{cases}
\gamma_p(\alpha) &= \displaystyle{\sum_{i : \phib'_i - \alpha \thetab_i> 0} \vb_{i} \br{\phib'_i - \alpha \thetab_{i}}} \,,\\
\gamma_m(\alpha) &= \displaystyle{\sum_{i : \phib'_i - \alpha \thetab_i< 0} \vb_{i} \br{\phib'_i - \alpha \thetab_{i}}} \,,
\end{cases}
$$
such that:
\begin{equation}\label{eq:etagamma}
\eta_{\alpha}\br{\vb,\thetab,\thetab',m} = \left|\alpha\right| m + \max \br{\gamma_p(\alpha), -\gamma_m(\alpha)}\,.
\end{equation}
Let us first characterise the existence and properties of the solution to the minimisation problem (\ref{eq:minalpha}).

\begin{TheoremS}
\label{th:alpha}
For any $\br{\vb,\thetab,\thetab',m} \in \RR_+^n \times \RR^n\times \RR^n \times \RR$, the function
$$
\alpha \in \RR \rightarrow \eta_{\alpha}\br{\vb,\thetab,\thetab',m}
$$
is continuous, piecewise affine, convex and nonnegative. It reaches at least a minimum at a value $\alpha^* \in \Bcal$ where
$$
\Bcal = \cbr{0} \cup \cbr{\frac{\thetab'_i}{\thetab_i} \,:\, i\in\supp(\thetab)\cap\supp(\vb) } \cup \cbr{\alpha \in \RR: \gamma_p(\alpha) = \gamma_m(\alpha)}\,.
$$
\end{TheoremS}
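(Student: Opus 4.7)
The plan is to decompose $\eta_\alpha$ into elementary convex piecewise-affine pieces and then invoke the standard fact that a convex piecewise-affine function attains its minimum either at a breakpoint or on a closed flat interval whose endpoints are breakpoints. Using $\vb \in \RR_+^n$, I would first rewrite
$$
\gamma_p(\alpha) = \sum_{i=1}^n \vb_i (\thetab'_i - \alpha \thetab_i)_+, \qquad -\gamma_m(\alpha) = \sum_{i=1}^n \vb_i (\thetab'_i - \alpha \thetab_i)_-,
$$
so that each is a nonnegative sum of continuous, convex, piecewise-affine functions of $\alpha$, with summand $i$ contributing a single kink at $\alpha = \thetab'_i/\thetab_i$ precisely when $i \in \supp(\thetab) \cap \supp(\vb)$ (and being affine otherwise). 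Consequently both $\gamma_p$ and $-\gamma_m$, and by the closure of these four properties under pointwise max their maximum $\zeta(\thetab' - \alpha \thetab, \vb)$, are continuous, convex, piecewise affine, and nonnegative. Adding $|\alpha| m$ — which has the same four properties when $m \geq 0$, as is always the case since $m$ instantiates $\mb_j^{ref} \geq 0$ — establishes the stated qualitative properties of $\eta_\alpha$.

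Next I would establish existence of a minimum via an asymptotic argument. A direct case analysis on $\mathrm{sgn}(c)$ yields the positive-homogeneity identity $\zeta(c \ub, \vb) = |c| \zeta(\ub, \vb)$, from which $\zeta(\thetab' - \alpha \thetab, \vb) = |\alpha|\, \zeta(\thetab - \thetab'/\alpha, \vb)$ for $\alpha \neq 0$; by continuity of $\zeta(\cdot, \vb)$, this tends to $|\alpha| \zeta(\thetab, \vb)$ as $|\alpha| \to \infty$. If $\zeta(\thetab, \vb) > 0$, then $\eta_\alpha \to +\infty$ on both ends, and continuity delivers a finite minimizer. In the degenerate case $\zeta(\thetab, \vb) = 0$, which forces $\thetab_i \vb_i = 0$ for every $i$, the term $\zeta(\thetab' - \alpha \thetab, \vb) = \zeta(\thetab', \vb)$ is constant in $\alpha$, so $\eta_\alpha = |\alpha|m + \zeta(\thetab', \vb)$ is minimized at $\alpha = 0 \in \Bcal$.

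Finally, to localize a minimizer in $\Bcal$, I would catalogue the breakpoints of the convex piecewise-affine function $\eta_\alpha$ as the union of breakpoints of its three components: $\alpha = 0$ from $|\alpha|m$; $\alpha = \thetab'_i/\thetab_i$ for $i \in \supp(\thetab) \cap \supp(\vb)$ from the summands of $\gamma_p$ and $-\gamma_m$; and the crossings of $\gamma_p$ with $-\gamma_m$ introduced by the pointwise max. Since the minimum of a convex piecewise-affine function is attained at a breakpoint (or on a closed flat minimizing interval whose endpoints are themselves breakpoints), some minimizer $\alpha^*$ lies in this union, which is precisely $\Bcal$.

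The most delicate step will be the bookkeeping of breakpoints introduced by the pointwise max: I would need to verify that no kink of $\max(\gamma_p, -\gamma_m)$ arises beyond those inherited from the two arguments except at points where they cross, and that any minimum realized on a flat subinterval of $\eta_\alpha$ always admits an endpoint falling in the breakpoint set $\Bcal$.
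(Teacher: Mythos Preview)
Your proposal is correct and follows essentially the same route as the paper's proof: both decompose $\gamma_p$ and $-\gamma_m$ into sums of hinge functions $\vb_i(\thetab'_i - \alpha\thetab_i)_{\pm}$, catalogue the resulting breakpoints (from the hinges, from the pointwise max, and from $|\alpha|m$), and invoke the standard fact that a convex piecewise-affine function attains its minimum at a breakpoint. Your asymptotic argument for existence is more detailed than necessary---the paper simply relies on nonnegativity of $\eta_\alpha$, which for a piecewise-affine convex function with finitely many pieces already forces attainment at a breakpoint---and you correctly flag the implicit assumption $m\geq 0$ that the paper's proof also uses without comment.
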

\begin{proof}
For any $i\in\bbr{n}$, let
$$
\forall \alpha\in\RR\,,\quad \phi_i(\alpha) = \vb_i \max\br{0 , \thetab'_i - \alpha \thetab_i}\,.
$$
Since $\vb_i\geq 0$, $\phi_i(\alpha) = \vb_i \max\br{0 , \thetab'_i - \alpha \thetab_i }$ is continuous, piecewise affine, convex and nonnegative. It has a single breakpoint at $\alpha_i = \thetab'_i / \thetab_i$ if $\thetab_i \neq 0$ and $\vb_i >0$, and is constant otherwise. Since $\gamma_p(\alpha) = \sum_{i=1}^n \phi_i(\alpha)$, $\gamma_p$ is also continuous, piecewise affine, convex and nonnegative with breakpoints in $\cbr{\thetab'_i / \thetab_i\,:\,i \in \supp(\thetab) \cup \supp(\vb)}$. Taking $\psi_i(\alpha) = \vb_i \max\br{0 , \alpha \thetab_i - \thetab'_i }$ shows similarly that $-\gamma_m(\alpha) = \sum_{i=1}^n \psi_i(\alpha)$ has the same properties. Consequently, the function $\alpha \mapsto  \max \br{\gamma_p(\alpha), -\gamma_m(\alpha)}$ is also continuous, piecewise affine, convex and nonnegative, with possible breakpoints in
$$
\cbr{\thetab'_i / \thetab_i\,:\,i \in \supp(\thetab)\cup \supp(\vb)} \cup \cbr{\alpha \in \RR: \gamma_p(\alpha) = \gamma_m(\alpha)}\,.
$$
Since $\alpha\rightarrow\abs{\alpha}$ is also continuous, piecewise affine, convex and nonnegative, and has a breakpoint for $\alpha=0$, Theorem~\ref{th:alpha} follows by observing that a continuous, piecewise affine, convex and nonnegative function necessarily reaches a minimum at one of its breakpoints.
\end{proof}

Let $S=\abs{\supp(\thetab)\cap\supp(\vb)}$. Theorem~\ref{th:alpha} shows that it suffices to compute the values of $\eta_\alpha$ on at most $S+2$ values for $\alpha$ to find the global minimum. However, a naive computation of $\eta_\alpha$ using (\ref{eq:etagamma}) takes $O(\abs{\supp(\vb)})$ for each $\alpha$, hence a total complexity $O(S\times \abs{\supp(\vb)})$ to find the minimum of $\eta_\alpha$.

This can be improved to $O(\abs{\supp(\vb)} + S \ln S)$ by first sorting the $S+1$ breakpoints $b_i = \thetab'_i/\thetab_i$ for $i\in\supp(\thetab)\cap\supp(\vb)$ and  $b_{S+1}=0$ in increasing order:
$$
b_{\pi(1)} \leq b_{\pi(2)} \leq \ldots \leq b_{\pi(S+1)}\,,
$$
which takes $O(S \ln S)$ time. Adding by convention $b_{\pi(0)} = -\infty$ we observe that on each interval $(b_{k-1},b_k]$ the functions $\gamma_p$ and $\gamma_m$ are affine, of the form:
$$
\forall \alpha \in (b_{k-1},b_k]\,,\quad 
\begin{cases}
 \gamma_p(\alpha) &= s^k_p - \alpha t^k_p\,,\\
- \gamma_m(\alpha) &= s^k_m - \alpha t^k_m\,.\\
\end{cases}
$$
From the properties of $\gamma_p(\alpha) = \sum_{i=1}^n \phi_i(\alpha)$ and $-\gamma_m(\alpha) = \sum_{i=1}^n \psi_i(\alpha)$, we get the coefficients for $k=1$, i.e., for the interval $(-\infty,b_{\pi(1)}]$ in $O(\abs{\supp(\vb)})$ as follows:
\begin{equation}\label{eq:init}
\begin{cases}
s_p^1 &= \sum_{i\in\supp(\vb)\,:\,\thetab_i>0} \vb_i \thetab'_i + \sum_{i\in\supp(\vb)\,:\,\thetab_i=0} \vb_i \max(0,\thetab'_i)\,,\\
t_p^1 &= \sum_{i\in\supp(\vb)\,:\,\thetab_i>0} \vb_i \thetab_i\,,\\
s_m^1 &= - \sum_{i\in\supp(\vb)\,:\,\thetab_i<0} \vb_i \thetab'_i + \sum_{i\in\supp(\vb)\,:\,\thetab_i=0} \vb_i \max(0,- \thetab'_i)\,,\\
t_m^1 &= \sum_{i\in\supp(\vb)\,:\,\thetab_i<0} \vb_i \thetab_i\,.\\
\end{cases}
\end{equation}
This allows in particular to compute $\gamma_p(b_{\pi(1)})$, $\gamma_m(b_{\pi(1)})$, and therefore $\eta_{b_{\pi(1)}}$ from (\ref{eq:etagamma}). We can then iteratively compute the coefficients for $k+1$ from the coefficients for $k$ in $O(1)$ only, by observing that between the intervals $(b_{k-1},b_k]$ and $(b_{k},b_{k+1}]$, the only change in slope and intercept of $\gamma_p$ is due to the function $\phi_{\pi^{-1}(k)}$, when $\pi^{-1}(k) \neq S+1$. Let $i=\pi^{-1}(k)$. When $\thetab_i>0$, the slope of $\phi_i$ increases by $\vb_{i} \thetab_{i}$ and its intercept decreases by $\vb_i \thetab'_i$ at $b_i$. When $\thetab_i>0$, its slope increases by $-\vb_{i} \thetab_{i}$ and its intercept increases by $\vb_i \thetab'_i$. This translates into the following recursive formula for the coefficients of $\gamma_p$:
$$
s_p^{k+1} =
\begin{cases}
s_p^k - \vb_i \thetab'_i &\text{if }\thetab_i >0\,,\\
s_p^k + \vb_i \thetab'_i &\text{if }\thetab_i < 0\,,
\end{cases}
$$
and
$$
t_p^{k+1} = t_p^k - \vb_i \abs{\thetab_i}\,.
$$
A similar analysis on $\gamma_m$ leads to the following recursion:
$$
s_m^{k+1} =
\begin{cases}
s_m^k - \vb_i \thetab'_i &\text{if }\thetab_i >0\,,\\
s_m^k + \vb_i \thetab'_i &\text{if }\thetab_i < 0\,,
\end{cases}
$$
and
$$
t_m^{k+1} = t_m^k - \vb_i \abs{\thetab_i}\,.
$$
We can thus iteratively compute the coefficients on each interval, and thus the values of $\eta_\alpha$ on each breakpoint, with complexity $O(1)$ per breakpoint. Since $\alpha\mapsto\eta_\alpha$ is convex, we stop at the first $k$ such that $\eta_{b_{\pi(k+1)}} \geq \eta_{b_{\pi(k)}} $. From the equations of $\gamma_p$ and $\gamma_m$ on $(b_{\pi(k)} , b_{\pi(k+1)}]$ we can additionally check if there is a crossing point $\bar{\alpha}\in(b_{\pi(k)} , b_{\pi(k+1)}]$ such that $\gamma_p(\bar{\alpha}) = \gamma_m(\bar{\alpha})$, in which case we also compute $\eta_{\bar{\alpha}}$. The global minimum of $\alpha\mapsto\eta_\alpha$ is then $\min( \eta_{b_{\pi(k)}} ,  \eta_{\bar{\alpha}})$.

The overall algorithm is detailed in Algorithm~\ref{alg:find_alpha}.

\begin{algorithm}
\caption{Minimise $\eta$ in $\alpha$}
\label{alg:find_alpha}
\begin{algorithmic}[1]
\REQUIRE $\br{\vb,\thetab,\thetab',m} \in \RR_+^n \times \RR^n\times \RR^n \times \RR$.
\ENSURE $\eta_{min}\br{\vb,\thetab,\thetab',m}$
\STATE  $S \leftarrow$ indices in $\supp(\vb) \cap \supp(\thetab)$
\STATE $N \leftarrow$ length($S$)
\STATE $v\leftarrow \sqb{0, \frac{\thetab'_{S[1]}}{\thetab_{S[1]}}, \dots, \frac{\thetab'_{S[N]}}{\thetab_{S[N]}}}$
\STATE $\texttt{ind} \leftarrow \sqb{\texttt{none}, S[1], \dots, S[N]}$
\STATE $\texttt{rank} \leftarrow$ sort($v$) (in increasing order)
\STATE $v \leftarrow v[\texttt{rank}]; \; \texttt{ind} \leftarrow \texttt{ind} [\texttt{rank}]$
\STATE Initialise $s_p$, $s_m$, $t_p$, $t_m$ via (\ref{eq:init})
\STATE $\texttt{min} \leftarrow + \infty$
\FOR{$i$ in $1 \dots N + 1$}
\STATE $\texttt{newmin} \leftarrow \left|v[i]\right|m + \max \br{s_p - v[i]t_p, \; s_m - v[i] t_m}$
\IF{$\texttt{newmin} < \texttt{min}$}
\STATE $\texttt{min} \leftarrow \texttt{newmin}$
%\IF{$v[i] \ne 0$ and $\thetab_{\texttt{ind}[i]} > 0$}
\IF{$\texttt{ind}[i] \ne \texttt{none}$}
\STATE  $t_p \leftarrow t_p - \vb_{\texttt{ind}[i]} \abs{\thetab_{\texttt{ind}[i]}} $
\STATE  $t_m \leftarrow t_m - \vb_{\texttt{ind}[i]} \abs{\thetab_{\texttt{ind}[i]}} $
\IF{ $\thetab_{\texttt{ind}[i]} > 0$}
\STATE  $s_p \leftarrow s_p - \vb_{\texttt{ind}[i]}\thetab'_{\texttt{ind}[i]} $
\STATE $s_m \leftarrow s_m - \vb_{\texttt{ind}[i]}\thetab'_{\texttt{ind}[i]} $
\ELSE
\STATE $s_p \leftarrow s_p + \vb_{\texttt{ind}[i]}\thetab'_{\texttt{ind}[i]} $
\STATE  $s_m \leftarrow s_m + \vb_{\texttt{ind}[i]}\thetab'_{\texttt{ind}[i]} $
\ENDIF
\ENDIF
\ELSE
\STATE Check if there exists $\bar{\alpha}\in[v[i-1],v[i]]$ s.t. $\gamma_p(\bar{\alpha}) = \gamma_m(\bar{\alpha})$
\STATE Return min($\texttt{newmin}, \, \eta(\alpha^{intersection}))$
\ENDIF
\ENDFOR

\end{algorithmic}
\end{algorithm}

\subsection{Alternative solver for working set updates}\label{subsec:MIPS1}
In this section, we present an alternative solver to the inverted list approach (algorithm~\ref{alg:MIPS2} in section~\ref{subsec:MIPS}), which we call $MIPS1$, to compute the working set updates (\ref{eq:updates}). It relies on a pruning technique and does not require storing extra indices for the data. The main idea of this alternative approach is to compute inner products on a progressively growing subset of dimensions, and to maintain an upper-bound on the maximum attainable score on the remaining dimensions. This allows to discard a probe as soon as its maximum attainable score drops below the maximum score achieved so far without computing the inner product in its entirety. Algorithm \ref{alg:MIPS1} presents the procedure in details. It takes as input $\Qcal$ which contains the indices that define the queries of interest and outputs the updated working set $\Wcal$ and $\mb^{ref}$. For each query, we start by precomputing the partial inner product bounds $\rb^+ \in \RR^n$ and $\rb^- \in \RR^n$, where $\rb_i^+$ and $\rb_i^-$ are respectively the maximum and minimum attainable inner products between the query and any probe in the database on the dimensions from $i + 1$ to $n$. Formally, $\rb^+$ and $\rb^-$ are defined for a given query $j$ by:
\begin{align}
\label{eq:rp}
\forall i\in \bbr{n},  r_i^+ &= \sum_{m > i;\ \phib_m > 0} \Xb_{mj} \phib_m\\
\label{eq:rm}
\forall i\in \bbr{n},  r_i^- &= \sum_{m > i;\ \phib_m < 0} \Xb_{mj} \phib_m
\end{align}
and provide an upper bound on inner products with the query $\Xb_j \odot \phib$ as follows:
\begin{align*}
\forall k \in \bbr{p}, \quad \left(\Xb_j \odot \phib\right)^\top \Xb_k & = \sum_{m \leq i} \Xb_{mj} \phib_m \Xb_{mk} + \sum_{m > i} \Xb_{mj} \phib_m \Xb_{mk}\\
& \leq \sum_{m \leq i} \Xb_{mj} \phib_m \Xb_{mk} + \sum_{m > i;\ \phib_m > 0} \Xb_{mj} \phib_m\\
& = \sum_{m \leq i} \Xb_{mj} \phib_m \Xb_{mk} + r_i^+
\end{align*}
The bound involving $\rb^-$ can be obtained analogously. These bounds simply assume there is a probe vector which has ones in front of every positive entry of the query and none in front of its negative entries, or the reverse. Once these bounds have been precomputed, the inner product between the query and a probe is computed up to a certain dimension, and every $n_c \in \mathbb{N}$ dimensions we check whether there is a possibility that the inner product being computed becomes larger than the current maximum, or larger than $\lambda$. If it is impossible, then the probe can be safely discarded and the algorithm proceeds with the next probe. If not, the inner product is computed on $n_c$ more dimensions and a new check is performed. For all our simulations and real data experiments, we set $n_c$ to a default of 20. If a probe cannot be discarded then the algorithm updates when appropriate the active set $\Wcal$ and/or the current maximum absolute inner product obtained $\mb_j^{ref}$. For the pruning to be effective, we reorder the dimensions $1 \dots n$ so that queries are sorted in decreasing order in absolute value. As a consequence, the partial inner product bounds $\rb_i^+$ and $\rb_i^-$ are computed with the $n-i$ smallest entries in absolute value of the queries which makes them tighter than with any other ordering of the dimensions.\\

\begin{algorithm}
\caption{MIPS1}
\label{alg:MIPS1}
\begin{algorithmic}[1]
\REQUIRE $\Xb \in [0, 1]^{n \times p}, \,\phib \in \RR^n, \, \Qcal \subset \bbr{p}, \, \lambda \in \RR, \, \Wcal \subset \bbr{D}$
\PARAM $n_c \in \NN$
\ENSURE $\Wcal$, $\mb^{ref}$.
\STATE Reorder the dimensions $1\dots n$ such that $\thetab$ is sorted in descending order in absolute value and reorder the dimensions of $\Xb$ accordingly.
\STATE Reorder the columns of $\Xb$ in descending order of vector size.
\STATE \textbf{for} $j \in \Qcal$ \textbf{do} $\mb_j^{ref} \leftarrow 0$
\FOR{$j \in  \Qcal$}
\STATE Compute $\rb^+ \in \RR^n$ and  $\rb^- \in \RR^n$ via (\ref{eq:rp}) and (\ref{eq:rm}).
%\FOR{$i \in  \Scal_j$}
%\STATE $\rb^+_i = \displaystyle \sum_{\substack{m > i;\  m: \phi_m > 0}} \Xb_{mj} \phib_m \quad and\quad \rb^-_i = \displaystyle \sum_{\substack{m > i;\  m: \phi_m < 0}} \Xb_{mj} \phib_m$
%\ENDFOR
\FOR{$k \in  \bbr{p} $}
\STATE \textbf{if}  $k \in \Qcal$ \text{and} $k > j$ \text{then} \texttt{continue}
\STATE $d \leftarrow 0$ (inner product initialization); c = 0 (counter initialization);
\FOR{$i \in$  supp($\bm X_j$)}
\STATE $d \leftarrow d + \Xb_{ij} \Xb_{ik} \phib_i$
\STATE $c \leftarrow c + 1$.
\IF{ $c \texttt{ mod } n_c = 0$}
\STATE \textbf{if} $(d + \rb^+_i) < \text{min}(\mb^{ref}_j, \lambda) $ \texttt{and} $\left| (d + \rb^-_i)\right| < \text{min}(\mb_j^{ref} , \lambda)$ \textbf{then} go to next probe.
\ENDIF
\ENDFOR
\STATE \textbf{if}  $\mb^{ref}_j < d < \lambda$ \textbf{then} set $\mb^{ref}_j = d$
\STATE \textbf{if}  $d \geq \lambda$ and $\tau(k, j) \notin \Wcal$ \textbf{then} add $\tau(k, j)$ to $\Wcal$
\ENDFOR
\ENDFOR
\textbf{return} $\Wcal, \bm m^{ref}$
\end{algorithmic}
\end{algorithm}

We now compare $MIPS1$ to its naive counterpart (which we will call $Naive$ from now on) on several benchmark datasets in order to assess the speed-up obtained with the pruning. To be more specific, $Naive$ is implemented similarly to $MIPS1$ except the lines specific to pruning, i.e., lines 5, 12 and 13 in Algorithm \ref{alg:MIPS1}, are removed. The benchmark datasets we use are designed in such a way that the pruning rate achievable varies. To do this, we simulate a matrix $\Xb \in \RR^{n \times p}$, with $n=p=1000$, where the features are drawn from a Bernoulli distribution, whose parameter is itself drown from a uniform distribution $\Ucal_{[0.1, 0.5]}$. Then $\phib \in \RR^n$ is built in such a way that the cumulative sum of the vectors obtained by sorting $\phib_{|\phib \geq 0}$ and  $|\phib_{|\phib < 0}|$ follows the function $f(x) = \frac{1}{1 - e^{-\mu}} \left( 1 - e^{-\mu x}\right), x \in \cbr{0, 1}$ for a given parameter $\mu \in \RR^+$. The area under this cumulative sum, which is $\kappa(\mu) = \frac{1}{1 - e^{-\mu}} - \frac{1}{\mu} \in [0.5, 1]$, characterises the different vectors $\phib_\kappa$ obtained with different values of $\mu$. Figure~\ref{fig:cumsum} shows how the cumulative sums are modified with $\mu$. The interest of simulating different $\phib_\kappa$ is that the rate of pruning achievable increases with $\kappa$: the closer $\kappa$ is to 1, the higher the pruning rate. In the experiments presented hereafter, all $p$ features were taken as queries, i.e., $\Qcal = \bbr{p}$, and we took $\lambda=+\infty$ and $\Wcal = \emptyset$. The results are presented in Figure~\ref{fig:ratio}. The pruning rate, which we define as the average number of non-zero coordinates of the queries which were pruned out of their total number of non-zero coordinates, widely varies from 8\%  for $\kappa = 0.55$ to 84\% for $\kappa = 0.95$. Moreover, the speed-up obtained with $MIPS1$ compared to $Naive$ is almost equal to 1 minus the pruning rate. That means $MIPS1$ is twice as fast as $Naive$ when it can prune half of the total number of coordinates.\\

\begin{figure}
\begin{subfigure}[t]{0.55\linewidth}
	\caption{}\label{fig:cumsum}
	\includegraphics[scale=0.7]{./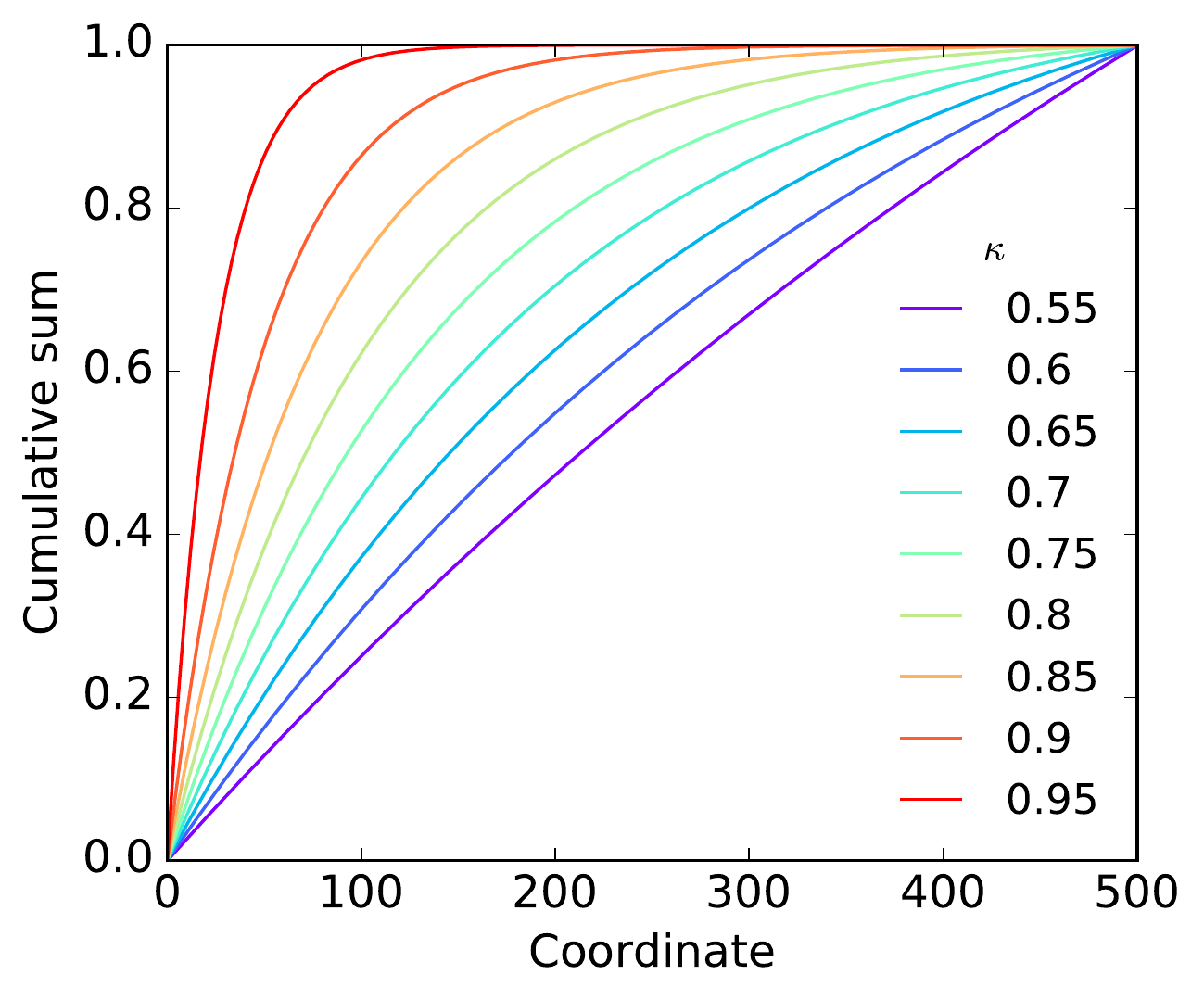}	
\end{subfigure}
\quad
\begin{subfigure}[t]{0.45\linewidth}
	\caption{}\label{fig:ratio}
	\includegraphics[scale=0.7]{./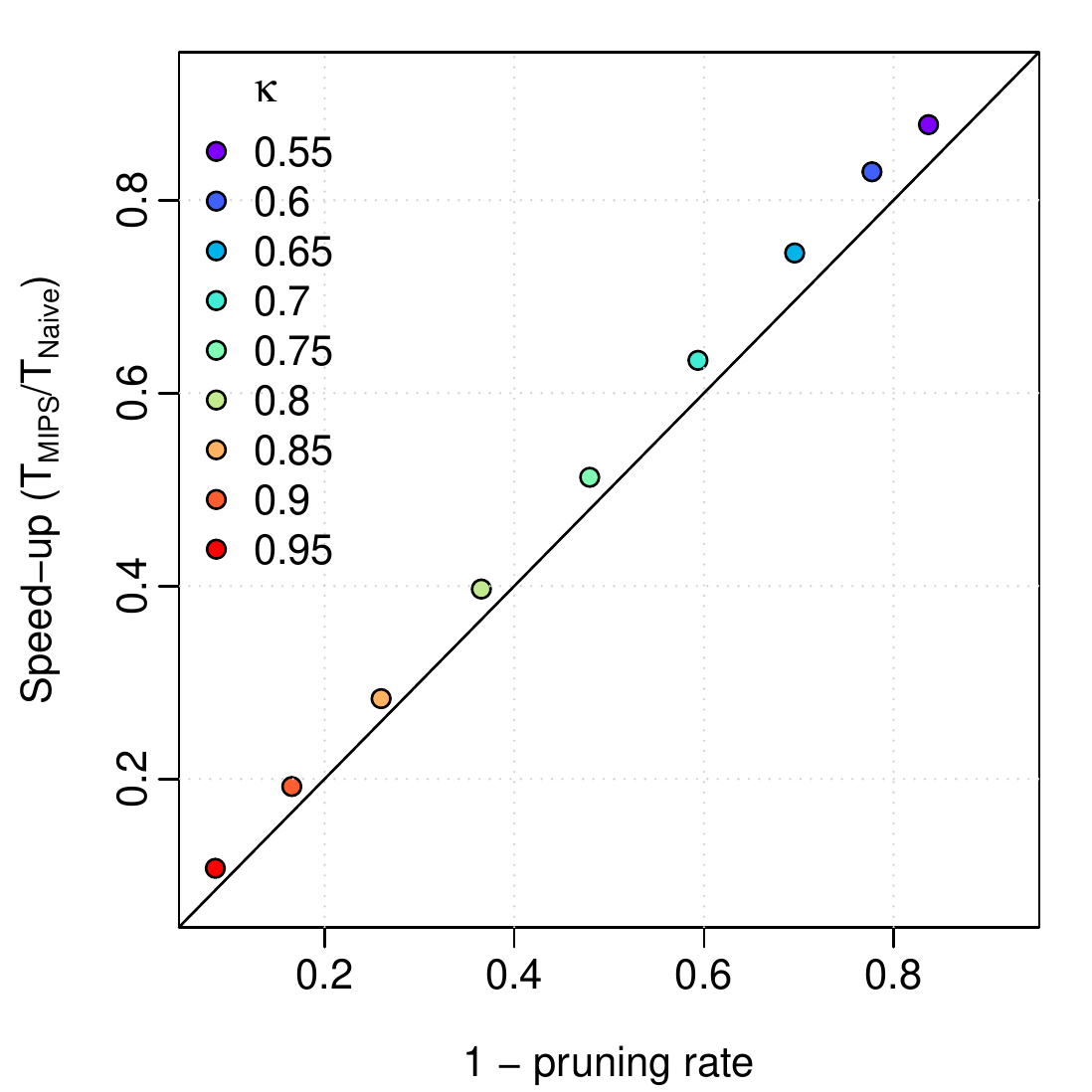}
\end{subfigure}
\caption{ (\subref{fig:cumsum}) Cumulative sum of the vector obtained by sorting the positive entries of $\phib_\kappa$ in decreasing order. (\subref{fig:ratio}) Speed-up obtained with $MIPS1$ compared to $Naive$ for different vectors $\phib_\kappa$ as a function of the pruning rate. The pruning rate is defined as the average proportion of coordinates in the queries which are pruned.}
\label{fig:test_MIPS}
\end{figure}

We now compare the performance of $Naive$, $MIPS1$ and $IL$ on the benchmark datasets (Figure \ref{fig:test_MIPS_time}). $MIPS1$ is the only method whose speed depends on $\kappa$ since it is the only method to implement pruning. It has the same performance in terms of speed as $Naive$ for the lowest pruning rate, while it is as fast as $IL$ for the highest pruning rates. For vectors $\phib$ following classical distributions such as the gaussian distribution, $\kappa \approx 0.7$ and $MIPS1$ is therefore expected to be $\times 1.6$ times faster than $Naive$ but $\times 11$ times slower than $IL$. An analysis of the complexity of $MIPS1$ and $IL$ can help to understand these results. For a given query, $MIPS1$ requires to compute inner products (although partially) with all $p$ vectors in the database. In our implementation, the vectors are encoded as sparse vectors, i.e., the vector $\Xb_j$ is represented by the list of its non-zero indices. If we assume that the number of non-zero elements in the query is $|q|$ and that the total number of non-zero elements of the vectors in $\Xb$ in $nnz$, then $MIPS1$ has a $O(p|q| + nnz)$ complexity to compute the $p$ inner products with the query. By contrast, the inverted index approach has a $O(|q|\frac{nnz}{n})$ complexity, where $\frac{nnz}{n}$ is the average length of an inverted index. As the number of non-zero elements $|q|$ in the query will typically be a fraction of the total number of samples $n$, the inverted index approach is expected to be faster than $MIPS1$ even though the pruning in $MIPS1$ can make it faster. This however may not be the case with dense data instead of sparse data.

\begin{figure}
\centering
\includegraphics[scale=0.7]{./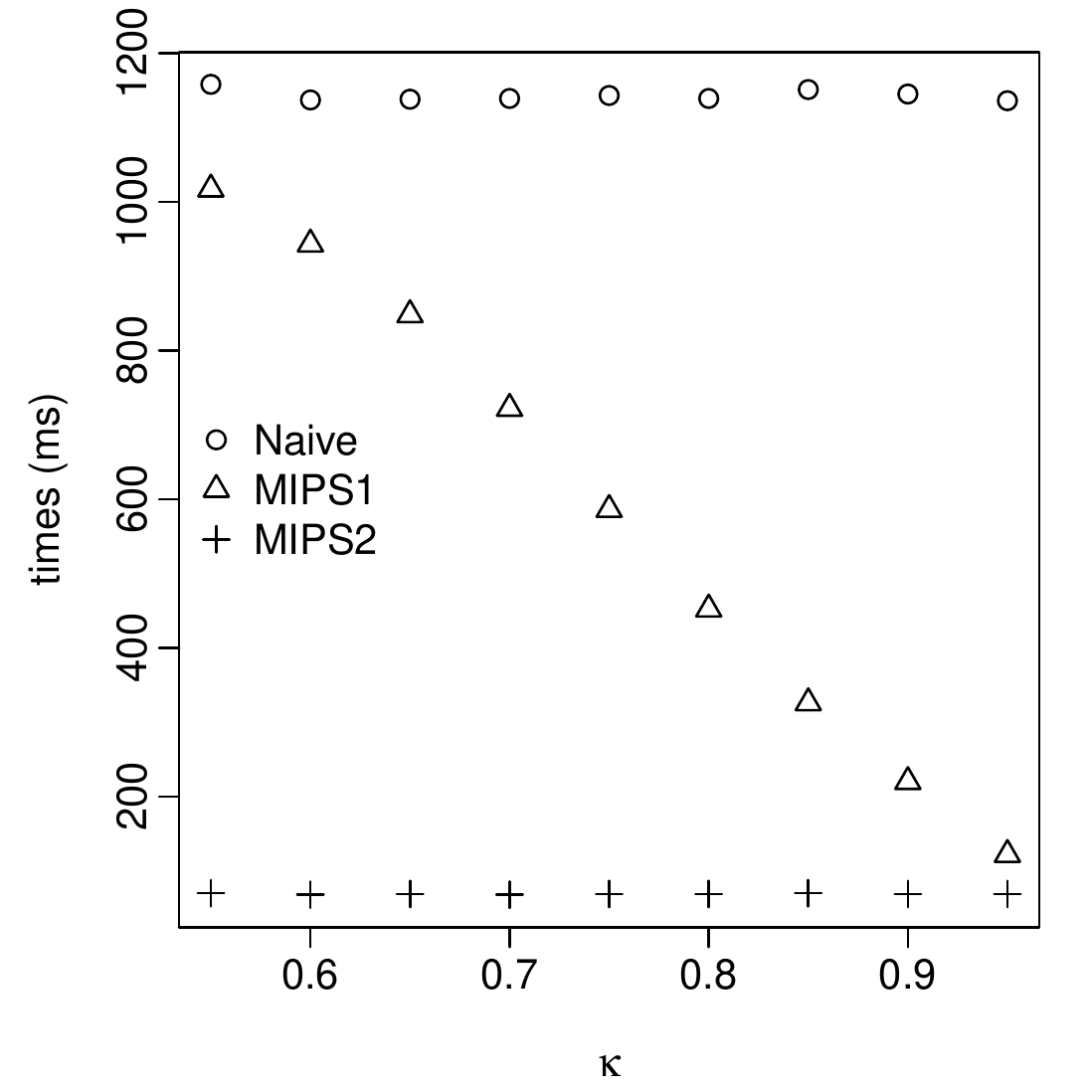}
\caption{Time (in ms) taken by $Naive$, $MIPS1$ and $MIPS2$ to solve Maximum Inner Product Search problems with responses characterised by different $\kappa$.}
\label{fig:test_MIPS_time}
\end{figure}

\subsection{SPP: depth-first vs breadth-first}\label{subsec:breadthfirst}
The Safe Pattern Pruning algorithm presented in \cite{Nakagawa2016SPP} deals with pairwise interactions but also higher-order interactions, and relies on a depth-first search scheme to explore the tree of patterns. However in our setting where we only consider pairwise interactions, we find that it is more efficient to implement a breadth-first search scheme for SPP. Indeed, the breadth-first search first identifies all the branches which can be screened. Then with this knowledge, we can restrict the number of interactions which are visited to those which only involve main effects whose corresponding branch was not screened.  Basically, if we consider a case where $p_s$ branches were screened among $p$ branches, then the total number of nodes visited will be $p + \frac{(p-p_s)(p-p_s-1)}{2}$. Figure (\ref{fig:SPP_time_comparison}) illustrates the difference in performance obtained with the original SPP and the breadth-first search version in the case of pairwise interactions. The speed up obtained with the breadth-first search version ranges from $\times 1.2$ for $n=p=1000$ to $\times 1.6$ for $n=1000, p=10000$. We therefore use the breadth-first search version of SPP as a comparison baseline in all our experiments. 

\begin{figure}
\begin{subfigure}[t]{0.5\linewidth}
	\caption{}\label{fig:SPP_n}
	\includegraphics[scale=0.65]{./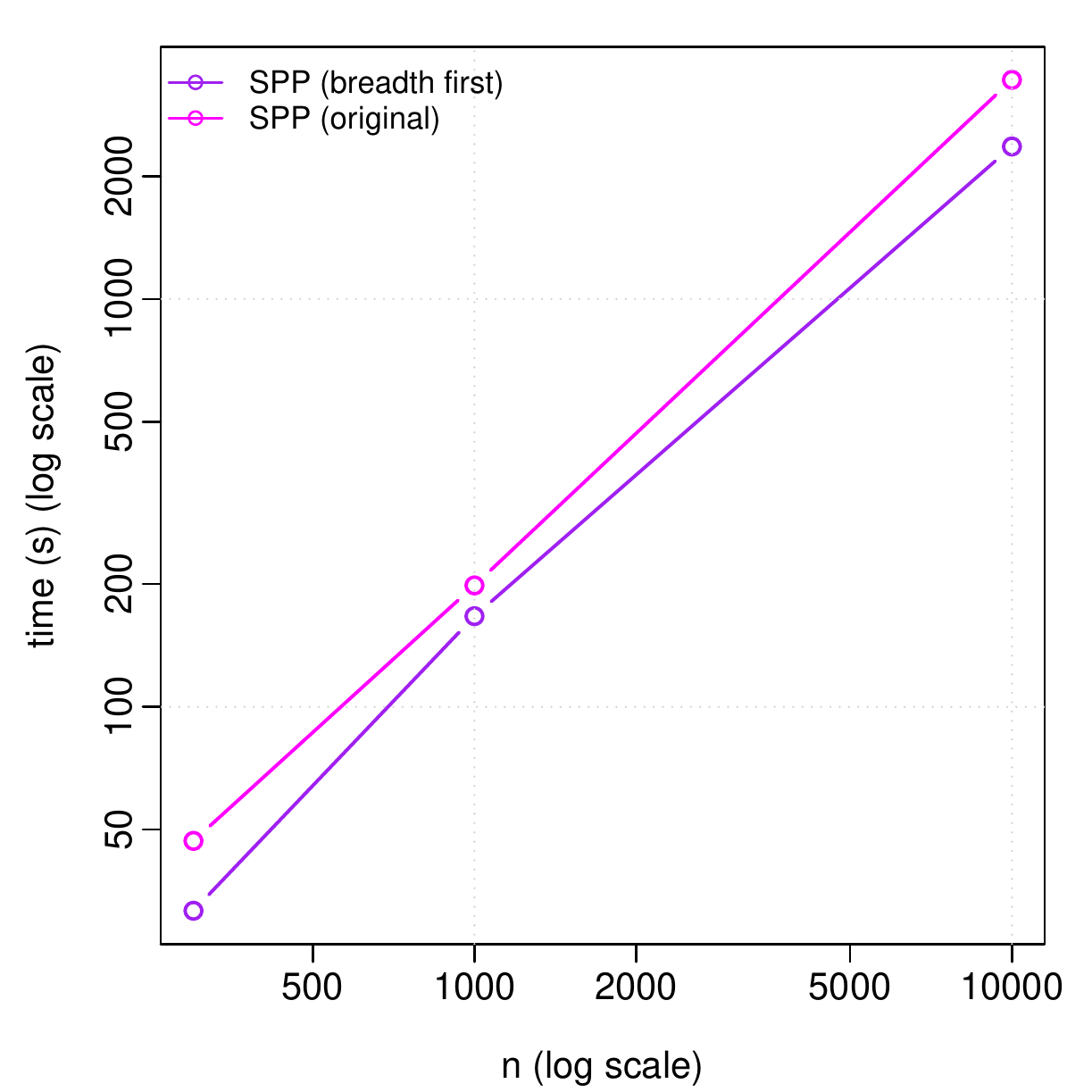}	
\end{subfigure}
\quad
\begin{subfigure}[t]{0.5\linewidth}
	\caption{}\label{fig:SPP_p}
	\includegraphics[scale=0.65]{./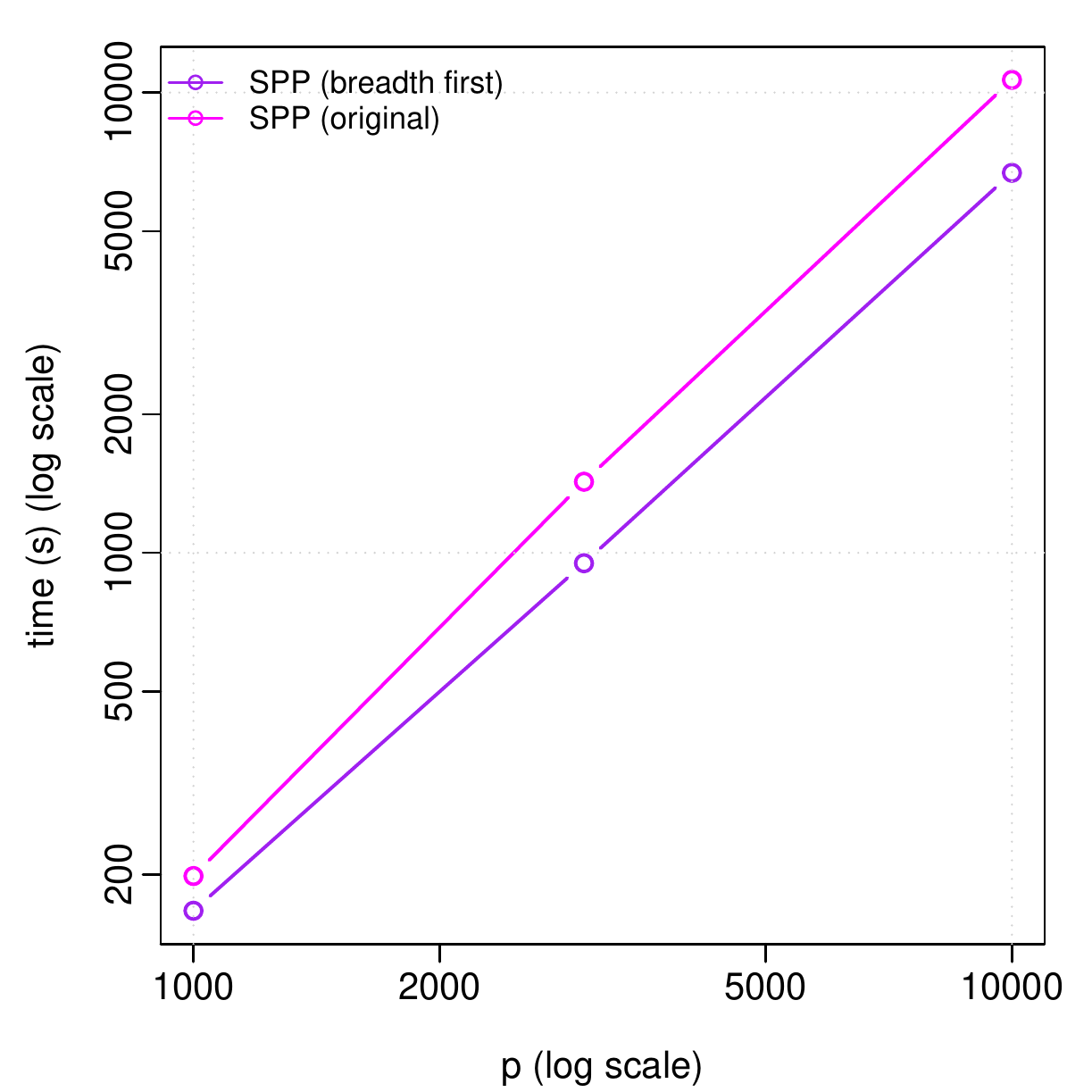}
\end{subfigure}
\caption{\textbf{Safe Pattern Pruning performance on simulated data for an entire regularisation path}. The breadth-first search SPP (which is adapted to order-2 interactions only) is in purple and the original depth-first search SPP (which is adapted to order-2 interactions and more) is in magenta. (\subref{fig:SPP_n}) Time in seconds for $p=1000$ fixed and $n$ varied. (\subref{fig:SPP_p}) Time in seconds for $n=1000$ fixed and $n$ varied.}
\label{fig:SPP_time_comparison}
\end{figure}

\bibliography{WHInter}
%\printbibliography
\end{document}